\newenvironment{enumeratenumeric}
               {\begin{enumerate}[1.] }{\end{enumerate}}
\newenvironment{proof}
               {\noindent\textbf{Proof\ }}{\hspace*{\fill}$\Box$\medskip}
\newtheorem{lemma}{Lemma}
\title{\boldmath Scalar, Vector and Tensor Harmonics on the Flat Compact
       Orientable Three-Manifolds}
\author[a]{Zhi-Peng Peng,}
\author[b,c]{Lee Lindblom,}
\author[a,d,1]{and Fan Zhang,\note{Corresponding author.}}
\affiliation[a]{Gravitational Wave and Cosmology Laboratory,
  Department of Astronomy, \\Beijing Normal University, Beijing 100875, China}
\affiliation[b]{Center for Astrophysics and Space Sciences,
  University of California at San Diego, \\La Jolla, CA 92093, USA}
\affiliation[c]{Center for Computational Mathematics,
  University of California at San Diego, \\La Jolla, CA 92093, USA}
\affiliation[d]{Department of Physics and Astronomy,
  West Virginia University, \\Morgantown, WV 26506, USA}
\emailAdd{zhipeng@mail.bnu.edu.cn}
\emailAdd{llindblom@ucsd.edu}
\emailAdd{fnzhang@bnu.edu.cn}
\abstract{  Observations suggest that our universe is spatially flat on the
  largest observable scales.  Exactly six different compact orientable
  three-dimensional manifolds admit flat metrics.  These six manifolds
  are therefore the most natural choices for building cosmological
  models based on the present observations.  This paper briefly
  describes these six manifolds and the harmonic basis functions
  previously developed for representing arbitrary scalar fields on
  them.  The principal focus of this paper is the development of new
  harmonics for representing arbitrary vector and second-rank tensor
  fields on these manifolds.  These new harmonics are designed to be
  useful tools for analyzing the dynamics of electromagnetic and
  gravitational fields on these spaces.
}
\begin{document}
\maketitle
\flushbottom

\section{\label{sec1}Introduction}
Current observations of the cosmic microwave radiation, together with
observations of lower-redshift distance indicators, show that the
large-scale average density of the universe differs by no more than
0.4\% (the present observational error estimate) from the critical
value that implies the geometry of our universe is spatially flat on
these largest observable scales~\cite{WMAP_9yr_Results}.  There are
eighteen different three-dimensional manifolds that admit flat
metrics~\cite{Feodoroff1885, Bieberbach1911, Novacki1934}, so these
are the most natural manifolds on which to construct realistic
cosmological models.  Ten of these flat manifolds are orientable,
while eight are non-orientable.  Spacetimes having non-orientable
spatial slices are not parallelizable (i.e. they do not admit
collections of smooth non-vanishing linearly independent vector
fields), and (consequently) such manifolds do not admit spinor
structures~\cite{Geroch1968, Geroch1970}.
Representations of fermions depend on the
  existence of these spinor structures, so unless (or perhaps until)
  new theoretical approaches for defining spinors are developed, the
  non-orientable three-manifolds appear to be less physically relevant
  than the orientable ones.  A purely emperical approach to the
  analysis of cosmological observations could justify the inclusion of
  the non-orientable cases anyway.  But given the theoretical spinor
  structure argument for excluding them, we have chosen to limit our
  analysis here to the orientable cases. Of the ten orientable flat
three-manifolds, six are compact while the remaining four are
non-compact.  Current observations do not allow us to see the entire
universe, so we have no way of knowing whether or not our universe is
spatially compact.  For computational convenience, we choose to limit
consideration here to models having compact spatial slices,
i.e. models with finite spatial volumes.

Each of the six compact orientable three-dimensional manifolds that
admits a flat metric can be obtained as a quotient
$\mathbb{E}^3/\Gamma$ of three-dimensional Euclidean space
$\mathbb{E}^3$ by an isometry group $\Gamma$ of symmetries of
$\mathbb{E}^3$.  The classification of these spaces has long been
known~\cite{Feodoroff1885, Bieberbach1911, Novacki1934}.  We use the
notation $E_1$, $E_2$, ..., $E_6$ to refer to these spaces.  The group
action that defines each of these spaces can be thought of as a
particular representation of $\mathbb{E}^3$ as a periodic lattice of
polytopes.  The individual flat compact manifolds can be thought of as
one of these polytopes with identifications between its faces.
Figure~1 of Ref.~\cite{riazuelo2004cosmic} illustrates these
polytope-with-identifications representations of these
manifolds.  The spaces $E_1$, $E_2$ and $E_3$ are based on rectangular
lattice representations of $\mathbb{E}^3$.  $E_1$ is the simple
three-torus, obtained by identifying opposite faces of a rectangular
solid.  $E_2$ and $E_3$ are obtained by identifying the opposite $x$
and $y$ faces of a rectangular sold, but twisting by $\pi$ before
identifying opposite $z$ faces for the half-turn space $E_2$, or by
$\pi/2$ for the quarter-turn space $E_3$.  $E_4$ and $E_5$ are
obtained from representations of $\mathbb{E}^3$ as a lattice of
hexagonal prisms.  These hexagonal prisms have six rectangular faces,
and two hexagonal faces.  The opposing rectangular faces of these
prisms are identified in $E_4$ and $E_5$, while the hexagonal faces
are twisted by $2\pi/3$ for the third-turn space $E_4$, or by $\pi/3$
for the sixth-turn space $E_5$.  The Hantzsche-Wendt space, $E_6$, is
based on a representation of $\mathbb{E}^3$ by a lattice of rhombic
dodecahedrons.  The space $E_6$ is formed by a particular
identification of the rhombic faces of one of these dodecahedrons.
Explicit descriptions of the symmetries used to create each of the
spaces, $E_1$, $E_2$, ..., $E_6$, are given in Sec.~\ref{sec3} as part
of our discussion of the scalar harmonics on these spaces.

We use the term harmonics in this paper to refer to the eigenfunctions
of the covariant Laplace operator.  These eigenfunctions form a
complete set of smooth functions on these compact orientable
manifolds, and can therefore be used as a basis for representing
arbitrary square integrable functions on them.  These harmonics can be
thought of as generalizations of the Fourier basis functions used to
construct representations of functions: $f(\mathbf{x}) =
\sum_\mathbf{k} f_\mathbf{\,k}\, e^{\,i\,\mathbf{k}\cdot\mathbf{x}}$.
Scalar harmonics have been developed for each of the six flat compact
orientable three-dimensional manifolds, and these harmonics have been
used to model the temperature variations in the cosmic microwave
background radiation that would be observed on these
spaces~\cite{riazuelo2004cosmic}.  Scalar harmonics, however, are not
adequate to model the full dynamics of the gravitational or the
electromagnetic fields.  For example, an analysis of
  the polarization properties of the cosmic microwave background
  radiation would require representations of the full vector structure
  of the electromagnetic field. A number
of studies have been carried out on the vector and tensor harmonics on
the (less physically relevant) manifolds with the topology of the
three-sphere, $\mathbb{S}^3$~\cite{sandberg1978tensor,
  jantzen1978tensor, rubin1984eigenvalues, rubin1985symmetric,
  lachieze2005laplacian, ben2016explicit, lindblom2017scalar}.  But
little attention has been paid to such matters on the less-familiar
three-manifolds that admit flat metrics.  There has been some work on
constructing second-rank tensor basis functions for $E_1$ in the
context of studying the effects of inhomogeneous initial conditions on
inflation~\cite{clough2018robustness}, or the possibility of
gravitational wave turbulence in the early
universe~\cite{clough2018difficulty}.  Here we significantly
generalize these studies by constructing complete vector harmonic and
second-rank tensor harmonic basis functions on all six flat compact
orientable three-manifolds.  To facilitate the analysis of the
dynamics of the gravitational and electromagnetic fields on these
manifolds, we have organized these new harmonics into subsets that
maximize the number of classes having vanishing divergence and trace.
These new harmonics were also constructed to satisfy nice
orthonormality conditions that make it easy to represent arbitrary
scalar and second-rank tensor fields on these manifolds.

The remainder of this paper is organized as follows.  Explicit
expressions for the symmetries used to construct each of the flat
compact orientable three-manifolds, $E_1$, $E_2$, ..., $E_6$, are
given in Sec.~\ref{sec3}, along with explicit expressions for the
(suitably re-normalized) scalar harmonics developed in
Ref.~\cite{riazuelo2004cosmic}. The analogous vector and
anti-symmetric second-rank tensor harmonics are developed in
Sec.~\ref{sec4}.  Two classes of these new vector harmonics are
divergence free, so they provide a natural way to represent the
transverse parts of dynamical electromagnetic fields.  Symmetric
second-rank tensor harmonics are developed in Sec.~\ref{sec5}. These
new tensor harmonics include five classes of trace-free harmonics, two
of which are also divergence-free.  These new tensor harmonics are
well suited therefore for representing the transverse-traceless parts
(i.e. the dynamical gravitational wave parts) of the gravitational
fields on these flat compact orientable three-manifolds.
Section~\ref{sec6} contains a brief summary and discussion of the new
results.  Several useful technical lemmas needed in the construction
of the new vector and tensor harmonics are given in an Appendix.

\section{\label{sec3}Scalar harmonics}

The scalar harmonics are defined here to be eigenfunctions of the
covariant Laplace operator:
\begin{equation}
  \nabla^a\nabla_a Y = -\kappa^2 Y.
  \label{e:ScalarLaplaceEigenvalueEq}
\end{equation}
We use the notation $Y[E_j]_{\,\mathbf{k}}$ to denote the harmonics on
the manifold $E_j$ (for $j=1,2,...,6$), where $\mathbf{k}= k_a=
(k_1,k_2,k_3)$ are parameters that identify a particular harmonic.  On
the three-torus, $E_1$, these harmonics are (up to normalizations)
just the Fourier basis functions:
\begin{equation}
  Y[E_1]_{\,\mathbf{k}}=\frac{e^{\,i\,\mathbf{k}\cdot\mathbf{x}}}{\sqrt{L_1L_2L_3}}.
  \label{e:ScalarE1Def}
\end{equation}
where $L_1$, $L_2$, and $L_3$, are the lengths of the three principal
axes of $E_1$.  The parameters $\mathbf{k}$ are chosen to ensure that
the harmonics have the appropriate periodicities on $\mathbb{E}^3$ to
make them smooth functions on $E_1$:
\begin{equation}
  \mathbf{k}=k_a = 2\pi\left(\frac{n_1}{L_1},\frac{n_2}{L_2},\frac{n_3}{L_3}\right).
\end{equation}
This requires the $n_a$ (for $a=1,2,3$) to be integers,
$n_a\in\mathbb{Z}$.  The eigenvalues for these
solutions to Eq.~(\ref{e:ScalarLaplaceEigenvalueEq}) are given by
\begin{equation}
  \kappa^2=k^ak_a = k_1^2 + k_2^2 +k_3^2.
  \label{e:ScalarEigenvalue}
\end{equation}
The normalization has been chosen in Eq.~(\ref{e:ScalarE1Def}) to
ensure that these harmonics satisfy the orthonormality conditions,
\begin{equation}
  \int_{-{L_1}/2}^{{L_1}/2}\int_{-{L_2}/2}^{{L_2}/2}\int_{-{L_3}/2}^{{L_3}/2}
  Y[E_1]_{\,\mathbf{k}}\, Y[E_1]_{\,\mathbf{k'}}^*\, dx\,dy\,dz=\delta_{{n_1}{n_1}'}
  \,\delta_{{n_2}{n_2}'}\,\delta_{{n_3}{n_3}'}.
  \label{e:ScalarOrthonormality}
\end{equation}

The harmonics for the remaining flat manifolds, $E_1$, ..., $E_6$,
were derived in Ref.~\cite{riazuelo2004cosmic} by using the fact that
each scalar harmonic $Y_{\text{\textbf{k}}}$ of $\mathbb{E}^3 /
\Gamma$ lifts to a $\Gamma-$periodic harmonic $Y_{\text{\textbf{k}}}$
of $\mathbb{E}^3$, i.e. to a scalar harmonic of $\mathbb{E}^3$ that is
invariant under the action of the isometry group $\Gamma$.  Finding
the scalar harmonics of the flat space $\mathbb{E}^3 / \Gamma$ is
equivalent, therefore, to finding the $\Gamma${\text{-}}periodic
scalar harmonics of $\mathbb{E}^3$.  Each element of an isometry group
$\Gamma$ of Euclidean space $\mathbb{E}^3$, can be written as a
rotation/reflection $\mathbf{M}$ followed by a translation
$\mathbf{T}$: i.e. these isometries map points
$\mathbf{x}\in\mathbb{E}^3$ to the points
$\mathbf{x'}=\mathbf{M}\cdot\mathbf{x}+\mathbf{T}$, or equivalently in
component notation $x'^a = M^a{}_b\,x^b+T^a$.  These transformations
are isometries so they preserve the forms of the metric
$\mathbf{g}=g_{ab}=\mathrm{diag}(1,1,1)$ and the inverse metric
$\mathbf{g}^{-1}=g^{ab}=\mathrm{diag}(1,1,1)$.  This implies that
$g_{ab}=M^c{}_a\,M^d{}_b\,g_{cd}$ and $g^{ab} =
M^a{}_c\,M^b{}_d\,g^{cd}$, and consequently that
$\kappa^2=k^ak_a=\mathbf{g}^{-1}(\mathbf{k},\mathbf{k})$ and
$\kappa^2= k_aM^a{}_c\, k_b\, M^b{}_d \,g^{cd}
=\mathbf{g}^{-1}(\mathbf{k}\mathbf{M},\mathbf{k\mathbf{M}})$ for these
isometries.

The isometry group $\Gamma$ of the half-turn space, $E_2$, consists of
pure translations by $L_1$, $L_2$ or $L_3$ in the principal
directions, plus a compound rotation-translation defined by
\begin{equation}
\begin{array}{rclcrcl}
  \mathbf{M}{[ E_2]} &=
&\left( \begin{array}{ccc} - 1 & 0 & 0\\ 0 & - 1 & 0\\ 0 & 0 &
    1\end{array} \right), &\qquad\qquad\qquad
    & \mathbf{T}[E_2]&=&\left(\begin{array}{c}
    0\\ 0\\ \tfrac{1}{2}L_3\end{array}\right).
\end{array}
\end{equation}
The scalar harmonics invariant under these transformations are given
by~\cite{riazuelo2004cosmic},
\begin{eqnarray}
  Y{[ E_2]}_{\textbf{\text{k}}} & = & \tfrac{1}{\sqrt{2}} \left[ Y{[
        E_1]}_{\textbf{\text{k}}} + ( - 1)^{n_3} Y{[
        E_1]}_{\textbf{\text{k}} \mathbf{M}{[ E_2]}} \right] \hspace{1em}
\notag \\ &&\,\,\,\, \quad \quad \quad \quad \quad
\text{for} \hspace{1em} \left( n_1 \in \mathbb{Z}^+, n_2, n_3 \in
  \mathbb{Z}\right) \hspace{1em} \text{or} \hspace{1em} \left( n_1 = 0, n_2 \in
  \mathbb{Z}^+, n_3 \in \mathbb{Z}\right), \label{e:ScalarE2Def1}\\
  Y{[ E_2]}_{(0, 0, k_3)} & = &
  Y{[ E_1]}_{(0, 0, k_3)} \hspace{1em} \text{for}
  \hspace{1em}  \left(n_1 = n_2 = 0, n_3 \in 2\mathbb{Z}\right),\label{e:ScalarE2Def2}
\end{eqnarray}
where the $Y[E_1]_\mathbf{k}$ are the basic $E_1$ harmonics given in
Eq.~(\ref{e:ScalarE1Def}), and $\textbf{\text{k}} = ( k_1, k_2, k_3) =
2 \pi\left( \frac{n_1}{L_1},\frac{n_2}{L_2}, \frac{n_3}{L_3}\right).$
These harmonics have the same eigenvalues as the $E_1$ harmonics,
Eq.~(\ref{e:ScalarEigenvalue}), and satisfy orthonormality
conditions analogous to Eq.~(\ref{e:ScalarOrthonormality}).

The isometry group of the quarter-turn space, $E_3$, consists of pure
translations by $L_1$, $L_2$ or $L_3$ in the principal directions,
plus a compound rotation-translation defined by
\begin{equation}
\begin{array}{rclcrcl}
  \mathbf{M}{[ E_3]} &=
&\left( \begin{array}{ccc} 0 & - 1 & 0\\ 1 & 0 & 0\\ 0 & 0 &
    1\end{array} \right), &\qquad\qquad\qquad
    & \mathbf{T}[E_3]&=&\left(\begin{array}{c}
    0\\ 0\\ \tfrac{1}{4}L_3\end{array}\right).
\end{array}
\end{equation}
These symmetries imply that $L_1=L_2$ in the $E_3$ case.  The scalar
harmonics invariant under these transformations are given
by~\cite{riazuelo2004cosmic},
\begin{eqnarray}
    Y{[ E_3]}_{\textbf{\text{k}}} &=&  \tfrac{1}{2}
  \left[ Y{[E_1]}_{\textbf{\text{k}}}
    + {i}^{\,n_3} Y{[E_1]}_{\textbf{\text{k}} M{[ E_3]}}
    + {i}^{\,2 n_3} Y{[E_1]}_{\textbf{\text{k}} M{[ E_3]^2}}
    + {i}^{\,3 n_3} Y{[E_1]}_{\textbf{\text{k}} M{[ E_3]^3}} \right] \hspace{1em}
\notag \\ &&\, \,\,\, \quad \quad \quad \quad \quad
\text{for} \hspace{1em} \left(n_1 \in \mathbb{Z}^+, n_2 \in \mathbb{Z}^+
  \cup \{ 0 \}, n_3 \in \mathbb{Z}^{}\right), \\
  \!\!\!\!\!\!\!\!\!\!\!
  Y{[ E_3]}_{(0, 0, k_3)}  &=&  Y{[ E_1]}_{(0, 0, k_3)} \hspace{1em} \text{for}
  \hspace{1em}  \left(n_1 = n_2 = 0, n_3 \in 4\mathbb{Z}\right),
\end{eqnarray}
where the $Y[E_1]_\mathbf{k}$ are the basic $E_1$ harmonics given in
Eq.~(\ref{e:ScalarE1Def}), and $\textbf{\text{k}} = ( k_1, k_2, k_3) =
2 \pi\left( \frac{n_1}{L_1},\frac{n_2}{L_2}, \frac{n_3}{L_3}\right).$
These harmonics have the same eigenvalues as the $E_1$ harmonics,
Eq.~(\ref{e:ScalarEigenvalue}), and satisfy orthonormality
conditions analogous to Eq.~(\ref{e:ScalarOrthonormality}).

The spaces $E_4$ and $E_5$ are constructed from a hexagonal prism
lattice representation of $\mathbb{E}^3$ that is generated by the four
pure-translation symmetries
\begin{equation}
  \begin{array}{lclclcl}
    \mathbf{T}_1=\left(\begin{array}{c}L_1 \\ 0 \\ 0\end{array}\right),&\qquad
      &\mathbf{T}_2=\left(\begin{array}{c}-\tfrac{1}{2}L_1
        \\ \tfrac{\sqrt{3}}{2}L_2 \\ 0\end{array}\right),&\qquad
      &\mathbf{T}_3=\left(\begin{array}{c}-\tfrac{1}{2}L_1
        \\ -\tfrac{\sqrt{3}}{2}L_2 \\ 0\end{array}\right),&\qquad
        &\mathbf{T}_4=\left(\begin{array}{c}0 \\ 0
          \\ L_3\end{array}\right).
          \label{e:HexagonalT}
  \end{array}
\end{equation}
The isometry group of the third-turn space, $E_4$, consists of the
pure translations given in Eq.~(\ref{e:HexagonalT}) plus a compound
rotation-translation defined by
\begin{equation}
\begin{array}{rclcrcl}
  \mathbf{M}{[ E_4]} &=
  &\left( \begin{array}{ccc} -\tfrac{1}{2} & - \tfrac{\sqrt{3}}{2} & 0\\
    \tfrac{\sqrt{3}}{2}
    & -\tfrac{1}{2} & 0\\ 0 & 0 &
    1\end{array} \right), &\qquad\qquad\qquad
    & \mathbf{T}[E_4]&=&\left(\begin{array}{c}
    0\\ 0\\ \tfrac{1}{3}L_3\end{array}\right).
\end{array}
\end{equation}
These symmetries imply that $L_1=L_2$ in the $E_4$ case.  The scalar
harmonics invariant under these transformations are given
by~\cite{riazuelo2004cosmic},
\begin{eqnarray}
  Y{[ E_4]}_{\textbf{\text{k}}} & = & \tfrac{1}{\sqrt{3}} \left[ Y{[
        E_1]}_{\textbf{\textbf{\text{k}}}} + \omega^{\,2n_3} Y{[
        E_1]}_{\textbf{\text{k}}\mathbf{M}{[ E_4]}} + \omega^{\,4 n_3}
      Y{[ E_1]}_{\text{\textbf{k}} \mathbf{M}{[ E_4]^2}}
      \right] \hspace{1em}
      \notag \\ &&\,\,\,\, \quad \quad \quad \quad \quad
      \text{for}
  \hspace{1em} \left(n_1 \in \mathbb{Z}^+, n_2 \in \mathbb{Z}^+ \cup \{ 0 \}, n_3
  \in \mathbb{Z}\right), \\
  Y{[ E_4]}_{(0, 0, k_3)} & = & Y{[ E_1]}_{(0, 0, k_3)} \hspace{1em} \text{for}
  \hspace{1em}  \left(n_1 = n_2 = 0, n_3 \in 3 \mathbb{Z}\right),
\end{eqnarray}
where the $Y[E_1]_\mathbf{k}$ are the basic $E_1$ harmonics given in
Eq.~(\ref{e:ScalarE1Def}), and $\omega = e^{\,i\, \pi / 3}$.  To
preserve the hexagonal translation symmetry in this case we must also
take $\text{\textbf{k}} = ( k_1, k_2, k_3) = 2 \pi\left( -
\tfrac{n_2}{L_1}, \tfrac{2 n_1 - n_2}{\sqrt{3}L_2}, \tfrac{n_3}{L_3}
\right)$.  These harmonics have the same eigenvalues as the $E_1$
harmonics, Eq.~(\ref{e:ScalarEigenvalue}), and satisfy orthonormality
conditions analogous to Eq.~(\ref{e:ScalarOrthonormality}).

The isometry group of the sixth-turn space, $E_5$, consists of the
pure hexagonal lattice translations given in Eq.~(\ref{e:HexagonalT}),
plus a compound rotation-translation defined by
\begin{equation}
\begin{array}{rclcrcl}
  \mathbf{M}{[ E_5]} &=
  &\left( \begin{array}{ccc} \tfrac{1}{2} & - \tfrac{\sqrt{3}}{2} & 0\\
    \tfrac{\sqrt{3}}{2}
    & \tfrac{1}{2} & 0\\ 0 & 0 &
    1\end{array} \right), &\qquad\qquad\qquad
    & \mathbf{T}[E_5]&=&\left(\begin{array}{c}
    0\\ 0\\ \tfrac{1}{6}L_3\end{array}\right).
\end{array}
\end{equation}
These symmetries imply that $L_1=L_2$ in the $E_5$ case.  The scalar
harmonics invariant under these transformations are given
by~\cite{riazuelo2004cosmic},
\begin{eqnarray}
  Y{[ E_5]}_{\text{\textbf{k}}} & = & \tfrac{1}{\sqrt{6}} \Big[
    Y{[ E_1]}_{\textbf{\text{k}}}
    + \omega^{n_3} Y[E_1]_{\textbf{\text{k}}\mathbf{M}{[E_5]}}
    + \omega^{2 n_3} Y[E_1]_{\textbf{\text{k} }\mathbf{M}{[E_5]^2}}
    + \omega^{3 n_3} Y[E_1]_{\textbf{\text{k} }\mathbf{M}{[E_5]^3}}
    \notag \\ &&
    + \omega^{4 n_3} Y[E_1]_{\textbf{\text{k} }\mathbf{M}{[E_5]^4}}
    + \omega^{5 n_3} Y[E_1]_{\textbf{\text{k}}\mathbf{M}{[E_5]^5}}
    \Big] \hspace{1em} \nonumber  \\
  &&\quad\quad\quad\,\,\,\,\qquad\text{for}\hspace{1em}
   \left(n_1 \in \mathbb{Z}^+, n_2 \in \mathbb{Z}^+ \cup \{ 0 \}, n_2 <
  n_1, n_3 \in \mathbb{Z}^{}\right), \\
  Y{[ E_5]}_{(0, 0, k_3)} & = & Y{[ E_1]}_{(0, 0, k_3)} \hspace{1em} \text{for}
  \hspace{1em} \left(n_1 = n_2 = 0, n_3 \in 6 \mathbb{Z}\right),
\end{eqnarray}
where the $Y[E_1]_\mathbf{k}$ are the basic $E_1$ harmonics given in
Eq.~(\ref{e:ScalarE1Def}), and $\omega = e^{\,i\, \pi / 3}$.  To
preserve the hexagonal translation symmetry in this case we must also
take $\text{\textbf{k}} = ( k_1, k_2, k_3) = 2 \pi\left( -
\tfrac{n_2}{L_1}, \tfrac{2 n_1 - n_2}{\sqrt{3}L_2}, \tfrac{n_3}{L_3}
\right)$.  These harmonics have the same eigenvalues as the $E_1$
harmonics, Eq.~(\ref{e:ScalarEigenvalue}), and satisfy orthonormality
conditions analogous to Eq.~(\ref{e:ScalarOrthonormality}).

The isometry group of the Hantzsche-Wendt space, $E_6$, consists of
pure translations by $L_1$, $L_2$ or $L_3$ in the principal
directions, plus compound rotation-translations defined by
\begin{equation}
\begin{array}{rclcrcl}
  \mathbf{M}_1{[ E_6]} &=
  &\left( \begin{array}{ccc} 1& 0 & 0\\ 0& -1 & 0\\ 0 & 0 & -1\end{array} \right),
    &\qquad\qquad\qquad
    & \mathbf{T}_1[E_6]&=&\left(\begin{array}{c}
      \tfrac{1}{2}L_1\\ \tfrac{1}{2}L_2\\ 0\end{array}\right),\\
  \mathbf{M}_2{[ E_6]} &=
  &\left( \begin{array}{ccc} -1& 0 & 0\\ 0 & 1 & 0\\ 0 & 0 & -1\end{array} \right),
    &\qquad\qquad\qquad
    & \mathbf{T}_2[E_6]&=&\left(\begin{array}{c}
      0\\ \tfrac{1}{2}L_2\\ \tfrac{1}{2}L_3\end{array}\right),\\
  \mathbf{M}_3{[ E_6]} &=
  &\left( \begin{array}{ccc} -1& 0 & 0\\ 0 & -1 & 0\\ 0 & 0 & 1\end{array} \right),
    &\qquad\qquad\qquad
    & \mathbf{T}_3[E_6]&=&\left(\begin{array}{c}
      \tfrac{1}{2}L_1\\ 0 \\ \tfrac{1}{2}L_3\end{array}\right).
\end{array}
\end{equation}
The scalar harmonics invariant under these transformations are given
by~\cite{riazuelo2004cosmic},
\begin{eqnarray}
  Y{[ E_6]}_{\text{\textbf{k}}} & = &
  \tfrac{1}{2} \left[ Y{[E_1]}_{\text{\textbf{k}}}
    + ( - 1)^{n_1 - n_2} Y{[E_1]}_{\text{\textbf{k}} \mathbf{M}_1{[E_6]}}
    + ( - 1)^{n_2 - n_3} Y{[E_1]}_{\text{\textbf{k}} \mathbf{M}_2{[E_6]}}
    + ( - 1)^{n_3 - n_1} Y{[E_1]}_{\text{\textbf{k}} \mathbf{M}_3{[E_6]}}
    \right] \nonumber  \\ &&\qquad \text{for}\hspace{1em}\left( n_1, n_2
  \in \mathbb{Z}^+, n_3 \in \mathbb{Z} \right) \hspace{1em} \text{or}
  \hspace{1em} \left( n_1 = 0, n_2, n_3 \in \mathbb{Z}^+ \right)
  \hspace{1em}
  \nonumber  \\ &&\qquad
  \text{or} \hspace{1em} \left( n_2 = 0, n_1, n_3 \in
  \mathbb{Z}^+ \right), \\
  Y{[ E_6]}_{(k_1, 0, 0)} & = & \frac{1}{\sqrt{2}} \left[ Y{[ E_1]}_{(k_1, 0, 0)} +
    Y{[ E_1]}_{(- k_1, 0, 0)}\right] \hspace{1em}
  \text{for} \hspace{1em} \left( n_1 \in 2 \mathbb{Z}^+, n_2=n_3=0\right), \\
  Y{[ E_6]}_{(0, k_2, 0)} & = & \frac{1}{\sqrt{2}} \left[ Y{[ E_1]}_{(0, k_2, 0)} +
    Y{[ E_1]}_{(0, - k_2, 0)}\right] \hspace{1em}
  \text{for} \hspace{1em} \left( n_2 \in 2\mathbb{Z}^+, n_1=n_3=0\right), \\
  Y{[ E_6]}_{(0, 0, k_3)} & = & \frac{1}{\sqrt{2}} \left[ Y{[ E_1]}_{(0, 0, k_3)} +
    Y{[ E_1]}_{(0, 0, - k_3)}\right] \hspace{1em}
  \text{for} \hspace{1em} \left( n_3 \in 2 \mathbb{Z}^+, n_1=n_2=0\right),
\end{eqnarray}
where the $Y[E_1]_\mathbf{k}$ are the basic $E_1$ harmonics given in
Eq.~(\ref{e:ScalarE1Def}), and $\textbf{\text{k}} = ( k_1, k_2, k_3) =
2 \pi\left( \frac{n_1}{L_1},\frac{n_2}{L_2}, \frac{n_3}{L_3}\right).$
These harmonics have the same eigenvalues as the $E_1$ harmonics,
Eq.~(\ref{e:ScalarEigenvalue}), and satisfy orthonormality
conditions analogous to Eq.~(\ref{e:ScalarOrthonormality}).

\section{\label{sec4}Vector harmonics}

This section constructs vector harmonics on the six flat compact
orientable three-manifolds.  While these harmonics are not unique and
can be chosen in a variety of different way, our goal here is to
choose harmonics having three useful properties: First, the vector
harmonics constructed here will be eigenfunctions of the covariant
Laplace operator:
\begin{equation}
  \nabla^b\nabla_b Y^a = -\kappa^2 Y^a.
  \label{e:VectorLaplaceEigenvalueEq}
\end{equation}
This ensures that these harmonics will form a complete basis for the
(square integrable) vector fields on these manifolds.  We use the
notation $Y[E_j]^a_{(A)\,\mathbf{k}}$ (for $A=0,1,2$) to denote the
three linearly independent classes of vector harmonics that satisfy
Eq.~(\ref{e:VectorLaplaceEigenvalueEq}) on the space $E_j$ (for
$j=1,2,...,6$).  Second, the vector harmonics constructed here will
satisfy nice orthonormality conditions,
i.e. $Y[E_j]^a_{(A)\,\mathbf{k}}$ and $Y[E_j]^a_{(A')\,\mathbf{k'}}$ will
be orthogonal under the standard $L_2$ inner product unless
$\mathbf{k}=\mathbf{k'}$ and $A=A'$.  And third, the vector harmonics
constructed here in classes $A=1$ and $A=2$ will be divergence free.

It is easy to construct one nice class of vector harmonics on all the
flat compact orientable three-manifolds.  These harmonics, which we
call the class $A=0$ harmonics, are defined as the gradients of the
scalar harmonics on each manifold:
\begin{equation}
  Y[E_j]^a_{(0)\,\mathbf{k}}=\kappa^{-1}\nabla^a Y[E_j]_{\mathbf{k}},
  \label{e:VectorHarmonic0Def}
\end{equation}
where $\nabla^a=g^{ab}\nabla_b$ and $\kappa^2$ is the corresponding
eigenvalue of the covariant scalar Laplace operator,
Eq.~(\ref{e:ScalarEigenvalue}).  Since the scalar harmonics
$Y[E_j]_{\mathbf{k}}$ are smooth eigenfunctions of the Laplace
operator, their gradients are automatically smooth eigenfunctions
having the same eigenvalues on these flat manifolds.  The
normalization factor in Eq.~(\ref{e:VectorHarmonic0Def}) is chosen to
ensure that these harmonics satisfy the nice orthonormality conditions
\begin{equation}
  \int_{- L _1/ 2}^{L_1 / 2} \int_{- L_2 / 2}^{L _2/ 2} \int_{- L_3 / 2}^{L_3 / 2} g_{ab}Y{[
  E_j]}^a_{(0)\,\mathbf{k}} Y{[E_j]}^{b*}_{(0)\,\mathbf{k}'} \,d x\, d y\, d z =
  \delta_{{n_1} {n_1}'}\,\delta_{{n_2} {n_2}'}\,\delta_{{n_3} {n_3}'}\,.
\end{equation}
The divergences of these class $A=0$ harmonics are given by
\begin{equation}
  \nabla_aY{[E_j]}^a_{(0)\,\mathbf{k}}=-\kappa Y[E_j]_{\mathbf{k}}.
  \label{e:VectorHarmonicDivergence0}
\end{equation}

The construction of the class $A=1$ and $A=2$ vector harmonics is less
straightforward.  We have developed three different approaches that
produce vector harmonics satisfying the three useful properties listed
above.  The first approach produces the simplest expressions for the
vector harmonics, but this approach only works in the space $E_1$.  A
somewhat more general approach can be used to derive fairly simple
expressions for these vector harmonics in the spaces $E_1,...,E_5$, but
not in $E_6$.  We have also developed an even more general approach
capable of constructing vector harmonics in all the $E_j$ spaces, but
the resulting expressions produced in this way are quite complicated.
Here we report the results of this general approach only for the
otherwise intractable $E_6$ case.

The first approach to constructing the needed vector harmonics is
based on the fact that any covariently constant vector field $c^a$ is
invariant under the pure translation symmetry group of the
three-torus, $E_1$.  Any vector field of the form
$c^aY[E_1]_\mathbf{k}$ is an eigenfunction of the covariant Laplace
operator, and therefore a candidate vector harmonic.  The choices for
three linearly independent vectors $c^a$ will define the three classes
of vector harmonics for this case.  The class $A=0$ vector harmonics
defined in Eq.~(\ref{e:VectorHarmonic0Def}) for the space $E_1$ have
this form, $Y[E_1]^a_{(0)\,\mathbf{k}}=i\,\kappa^{-1}g^{ab}k_b
Y[E_1]_{\mathbf{k}}$ with $c^a=i\,\kappa^{-1}g^{ab}k_b$.  All that is
needed to complete this simple approach is to choose two additional
constant vectors to define the class $A=1$ and $A=2$ harmonics.  Any
unit vectors orthogonal to $k_a$ will do.  One choice is
$\ell_a=(-k_2-k_3,k_1-k_3,k_1+k_2)$ and
$m_a=g_{ab}\epsilon^{bcd}k_c\ell_d$, where $\epsilon^{abc}$ is the
covariantly constant, $\nabla_a\epsilon^{bcd}=0$, totally
antisymmetric tensor volume element.  For convenience, these vectors
can be normalized by setting $\hat k_a = \kappa^{-1}k_a$, $\hat\ell_a
= \lambda^{-1}\ell_a$, and $\hat m_a = \kappa^{-1}\lambda^{-1}m_a$,
where $\lambda^2=\ell^a\ell_a=(k_2+k_3)^2+(k_1-k_3)^2+(k_1+k_2)^2$.
Vector harmonics defined in terms of the orthonormal constant vectors
$\hat k^a=g^{ab}\hat k_b$, $\hat \ell^a=g^{ab}\hat\ell_b$, and $\hat
m^a=g^{ab}\hat m_b$ are given by
\begin{eqnarray}
  Y[E_1]^a_{(0)\,\mathbf{k}}&=&i\,\hat k^aY[E_1]_{\mathbf{k}},
  \label{e:VectorHarmonicE1Def0}\\
  Y[E_1]^a_{(1)\,\mathbf{k}}&=&i\,\hat\ell^aY[E_1]_{\mathbf{k}},
  \label{e:VectorHarmonicE1Def1}\\
  Y[E_1]^a_{(2)\,\mathbf{k}}&=&i\,\hat m^aY[E_1]_{\mathbf{k}},
  \label{e:VectorHarmonicE1Def2}
\end{eqnarray}
Special forms for $\hat k_a$, $\hat\ell_a$, and $\hat m_a$ are needed
when $\kappa=0$ or $\lambda=0$.  In the $\kappa=0$ case we can define
$\hat k_a=(1,0,0)$, $\hat \ell_a=(0,1,0)$ and $\hat m_a=(0,0,1)$.  In
the $\lambda=0$ but $\kappa\neq 0$ case $\hat
k_a=\tfrac{1}{\sqrt{3}}(1,-1,1)$ so we can simply define $\hat
\ell_a=\tfrac{1}{\sqrt{2}}(1,1,0)$ and $\hat
m_a=\tfrac{1}{\sqrt{6}}(-1,1,2)$.  All these vector harmonics defined
in Eqs.~(\ref{e:VectorHarmonicE1Def0})--(\ref{e:VectorHarmonicE1Def2})
satisfy each of the useful properties listed above.  They are
eigenfunctions of the covariant Laplace operator with eigenvalue
$-\kappa^2$, and satisfy the following orthonormality conditions,
\begin{equation}
  \int_{- L _1/ 2}^{L_1 / 2} \int_{- L_2 / 2}^{L _2/ 2} \int_{- L_3 / 2}^{L_3 / 2} g_{ab}Y{[
  E_1]}^a_{(A)\,\mathbf{k}} Y{[ E_1]}^{b*}_{(B)\,\mathbf{k}'} \,d x\, d y\, d z =
  \delta_{AB}\,\,\delta_{{n_1} {n_1}'}\,\delta_{{n_2} {n_2}'}\,\delta_{{n_3} {n_3}'}\,.
  \label{e:VectorOrthonormalityE1}
\end{equation}
The divergences of these vector harmonics on the space $E_1$
satisfy,
\begin{eqnarray}
  \nabla_aY{[E_1]}^a_{(0)\,\mathbf{k}}&=&-\kappa Y[E_1]_{\mathbf{k}},
  \label{e:VectorDivergenceE1Class0}\\
  \nabla_aY{[E_1]}^a_{(1)\,\mathbf{k}}&=&
  \nabla_aY{[E_1]}^a_{(2)\,\mathbf{k}}=0.\label{e:VectorDivergenceE1Class2}
\end{eqnarray}
The divergences of these class $A=1$ and $A=2$ vector harmonics vanish
whenever $\hat\ell_a$ and $\hat m_a$ are chosen to be orthogonal to
$\hat k_a$.

The second approach to constructing vector harmonics uses the fact
that the unit vector along the $z$-axis, $\hat z^a$, is the only unit
vector field invariant under all the symmetry groups of the manifolds
$E_1,...,E_5$.  Therefore $\hat z^aY[E_j]_\mathbf{k}$ (for
$j=1,...,5$) is an eigenfunction of the covariant Laplace operator,
and can be used in the construction of the vector harmonics on these
spaces in much the same way the constant vectors $c^a$ were used in
the first approach.  We note that $\hat
z^a\nabla_aY[E_j]_\mathbf{k}=i\,k_3 Y[E_j]_\mathbf{k}$ in all of these
cases. The vector harmonics for these manifolds can therefore be
taken to be
\begin{eqnarray}
  Y[E_j] ^a_{(0)\,\mathbf{k}} & = &\tfrac{1}{\kappa}\,
  \nabla^a Y{[ E_j]}_{\textbf{\text{k}}},
  \label{e:VectorHarmonicsE2-5Def0}\\
  Y[E_j]^ a_{(1)\,\mathbf{k}} & = &
  \tfrac{1}{\kappa k_3\sqrt{\kappa^2-k_3^2}}
    \left(\kappa^2\hat z^a\hat z^b-k_3^2g^{ab}\right)\nabla_bY[E_j]_\mathbf{k}
       ,\label{e:VectorHarmonicsE2-5Def1}\\
       Y[ E_j]^ a_{(2)\,\mathbf{k}} & = &
       \tfrac{1}{\sqrt{\kappa^2- k_3^2}}\,
           \epsilon^{a b c}\, \hat z_b \nabla_c Y[ E_j]_\mathbf{k}
    ,\label{e:VectorHarmonicsE2-5Def2}
\end{eqnarray}
so long as $\kappa^2\neq 0$, $k_3^2\neq 0$, and $\kappa^2\neq k_3^2$.
When $\kappa^2=0$, the vector harmonics must be spatially constant
vector fields.  Three linearly independent spatially constant vector
fields exist in the space $E_1$, so any orthonormal set can be used as
$\kappa=0$ harmonics in that space.  In the spaces $E_2,...,E_5$ the
only spatially constant unit vector field is $\hat z^a$, so it becomes
the only $\kappa=0$ vector field on those spaces.  When $k_3=0$ and
$\kappa\neq 0$ the vector harmonics are independent of $z$, i.e. they
depend only on $x$ and $y$.  In this case the class $A=0$ and $A=2$
harmonics are given by Eqs.~(\ref{e:VectorHarmonicsE2-5Def0}) and
(\ref{e:VectorHarmonicsE2-5Def2}), but the expression for the $A=1$
harmonic must be replaced by $ Y[ E_j]^ a_{(1)\,\mathbf{k}} = \hat z^a
Y[E_j]_\mathbf{k}$.  Finally if $\kappa^2=k_3^2$ and $\kappa\neq 0$
the vector harmonics are independent of $x$ and $y$,
i.e. the harmonics
depend only on $z$.  In this case there is only the single class of
vector harmonics given by Eq.~(\ref{e:VectorHarmonicsE2-5Def0}). It is
straightforward to show that the vector harmonics defined in
Eqs.~(\ref{e:VectorHarmonicsE2-5Def0})--(\ref{e:VectorHarmonicsE2-5Def2})
are eigenfunctions of the covariant Laplace operator with eigenvalue
$-\kappa^2$, satisfy orthonormality conditions that are the analogs of
those given in Eq.~(\ref{e:VectorOrthonormalityE1}), and satisfy
divergence identities that are the analogs of those given in
Eqs.~(\ref{e:VectorDivergenceE1Class0}) and
(\ref{e:VectorDivergenceE1Class2}).

The third (most general) approach to constructing vector harmonics on
these compact orientable flat spaces, $E_j$, is based on the fact that
these spaces are quotients, $\mathbb{E}^3/\Gamma$, of Euclidean space
$\mathbb{E}^3$ and an isometry group $\Gamma$.  Therefore the problem
of finding vector harmonics on the $E_j$ spaces is equivalent to
finding the $\Gamma$-invariant vector harmonics on $\mathbb{E}^3$.
This can be done using the method developed in
Ref.~\cite{riazuelo2004cosmic} to derive expressions for the scalar
harmonics.  The {\it Vector Action Lemma} and the {\it Vector
  Invariance Lemma} described in the Appendix to this paper provide
the tools needed to construct linear combinations of the
$\mathbb{E}^3$ vector harmonics that are invariant under all the
elements of the symmetry groups $\Gamma$.  The vector harmonics
obtained using this method on the spaces $E_1,...,E_5$ are more
complicated than those derived using the first two approaches.  So
here we report only the vector harmonics
$Y[E_6]^{a}_{(A)\text{\textbf{k}}}$ obtained in this way for the
otherwise intractable $E_6$ case:
\begin{eqnarray}
  Y[E_6]^{a}_{(A)\,\mathbf{k}} & = & \tfrac{1}{2} \left[ Y[E_1]^a_{(A)\,\mathbf{k}}
  + ( - 1)^{n_1 - n_2} M_1[E_6]^a{}_{b} Y[E_1]^b_{(A)\,\mathbf{k} \mathbf{M}_1[E_6]}
    \right. \nonumber \\
  && \left.\quad
  + ( - 1)^{n_2 - n_3} M_2[E_6]^a{}_{b}Y[E_1]^b_{(A)\,\mathbf{k} \mathbf{M}_2[E_6]}
+ ( - 1)^{n_3 - n_1} M_3[E_6]^a{}_{b} Y[ E_1]^ b_{(A)\,\mathbf{k}\mathbf{M}_3[E_6]}
  \right],
  \nonumber \\
  &&\text{for} \hspace{1em}
  \left( n_1, n_2 \in \mathbb{Z}^+, n_3 \in
  \mathbb{Z} \right) \hspace{1em} \text{or} \hspace{1em} \left( n_1 = 0, n_2, n_3
  \in \mathbb{Z}^+ \right)\hspace{1em}
    \nonumber \\
  &&
  \text{or} \hspace{1em}
  \left( n_2 = 0, n_1, n_3 \in
  \mathbb{Z}^+ \right),
  \label{e:VectorHarmonicsE6a}
  \\
  Y[E_6]^ a_{( A)\,(k_1, 0, 0) } & = & \tfrac{1}{\sqrt{2}}
  \left[ Y[ E_1]^a_{(A)\,(k_1,0, 0)} + Y[E_1]^ a_{(A)\,(- k_1, 0, 0 )}\right],
  \hspace{1em} \text{for} \hspace{1em}
  \left(n_1 \in 2 \mathbb{Z}^+, n_2 = n_3 = 0\right),
  \label{e:VectorHarmonicsE6b}
  \\
  Y[E_6]^ a_{( A)\,(0,k_2, 0) } & = & \tfrac{1}{\sqrt{2}}
  \left[ Y[ E_1]^a_{(A)\,(0,k_2,0)} + Y[E_1]^ a_{(A)\,(0,-k_2, 0)}\right],
  \hspace{1em} \text{for} \hspace{1em}
  \left(n_2 \in 2 \mathbb{Z}^+, n_1 = n_3 = 0\right),
  \label{e:VectorHarmonicsE6c}
  \\
  Y[E_6]^ a_{( A)\,(0,0,k_3) } & = & \tfrac{1}{\sqrt{2}}
  \left[ Y[ E_1]^a_{(A)\,(0,0,k_3)} + Y[E_1]^ a_{(A)\,(0,0,- k_3)}\right],
  \hspace{1em} \text{for} \hspace{1em}
  \left(n_3 \in 2 \mathbb{Z}^+, n_1 = n_2 = 0\right),
    \label{e:VectorHarmonicsE6d}
\end{eqnarray}
where $Y[E_1]^{a}_{(A)\text{\textbf{k}}}$ are the vector harmonics on
$E_1$ described above, and $A = 0, 1, 2$.  We note that the
expressions in
Eqs.~(\ref{e:VectorHarmonicsE6a})--(\ref{e:VectorHarmonicsE6d}) for
the $A=0$ case are equivalent to Eq.~(\ref{e:VectorHarmonic0Def}). It
is straightforward to show that the vector harmonics defined in
Eqs.~(\ref{e:VectorHarmonicsE6a})--(\ref{e:VectorHarmonicsE6d}) are
eigenfunctions of the covariant Laplace operator with eigenvalue
$-\kappa^2$, satisfy orthonormality conditions that are the analogs of
those given in Eq.~(\ref{e:VectorOrthonormalityE1}), and satisfy
divergence identities that are the analogs of those given in
Eqs.~(\ref{e:VectorDivergenceE1Class0}) and (\ref{e:VectorDivergenceE1Class2}).
The proofs of these properties use the fact that the matrices
$\mathbf{M}$ that define the symmetries of these spaces preserve the
inner product of vectors, e.g. for arbitrary vectors $\mathbf{u}$ and
$\mathbf{v}$, $\mathbf{u}\cdot\mathbf{v}=g(\mathbf{u},\mathbf{v})
=g(\mathbf{M}\cdot\mathbf{u},\mathbf{M}\cdot\mathbf{v})
=(\mathbf{M}\cdot\mathbf{u})\cdot(\mathbf{M}\cdot\mathbf{v})$.

Anti-symmetric tensor fields, $w_{a b} = - w_{b a}$, on orientable
three-manifolds are dual to the vector fields $v^a$. Thus,
for every $w_{a b}$ there exists a vector field $v^a$ so that $w_{a b}
= \epsilon_{a b c} v^c$. Therefore, any anti-symmetric tensor field
can be represented as a sum of vector harmonics.

\section{\label{sec5}Tensor harmonics}

This section constructs symmetric second-rank tensor harmonics on the
six flat compact orientable three-manifolds.  While these harmonics
are not unique and can be chosen in a variety of ways, our goal here
is to choose harmonics having three useful properties: First, the
tensor harmonics constructed here will be eigenfunctions of the
covariant Laplace operator:
\begin{equation}
  \nabla^c\nabla_c Y^{ab} = -\kappa^2 Y^{ab}.
  \label{e:TensorLaplaceEigenvalueEq}
\end{equation}
This ensures that these harmonics will form a complete basis for the
(square integrable) symmetric second-rank tensor fields on these
manifolds.  We use the notation $Y[E_j]^{ab}_{(A)\,\mathbf{k}}$ (for
$A=0,...,5$) to denote the six linearly independent classes of tensor
harmonics that satisfy Eq.~(\ref{e:TensorLaplaceEigenvalueEq}) on the
space $E_j$ (for $j=1,...,6$).  Second, the tensor harmonics
constructed here will satisfy nice orthonormality conditions,
i.e. $Y[E_j]^{ab}_{(A)\,\mathbf{k}}$ and $Y[E_j]^{ab}_{(A')\,\mathbf{k'}}$
will be orthogonal under the standard $L_2$ inner product unless
$\mathbf{k}=\mathbf{k'}$ and $A=A'$.  And third, the tensor harmonics
constructed here in classes $A=1,...,5$ will be trace free, and those
in classes $A=4$ and $A=5$ will be divergence free.

It is easy to construct four classes of tensor harmonics that satisfy
these properties on all the flat compact orientable three-manifolds.
These harmonics, which we call the class $A=0,...,3$ harmonics,
are defined as,
\begin{eqnarray}
  Y[E_j]^{ab}_{(0)\,\mathbf{k}}&=&\tfrac{1}{\sqrt{3}}g^{ab}Y[E_j]_{\mathbf{k}},
  \label{e:TensorHarmonic0Def}\\
  Y[E_j]^{ab}_{(1)\,\mathbf{k}}&=&\tfrac{1}{\sqrt{6}}
  \left(3\kappa^{-2}
  \nabla^a\nabla^bY[E_j]_{\mathbf{k}}
  +g^{ab}Y[E_j]_{\mathbf{k}}\right),
  \label{e:TensorHarmonic1Def}\\
  Y[E_j]^{ab}_{(2)\,\mathbf{k}}&=&\tfrac{1}{\kappa\sqrt{2}}
  \left(\nabla^aY[E_j]^b_{(1)\,\mathbf{k}}+\nabla^bY[E_j]^a_{(1)\,\mathbf{k}}\right),
  \label{e:TensorHarmonic2Def}\\
  Y[E_j]^{ab}_{(3)\,\mathbf{k}}&=&\tfrac{1}{\kappa\sqrt{2}}
  \left(\nabla^aY[E_j]^b_{(2)\,\mathbf{k}}+\nabla^bY[E_j]^a_{(2)\,\mathbf{k}}\right),
  \label{e:TensorHarmonic3Def}
\end{eqnarray}
where $\kappa^2$ is the corresponding eigenvalue of the covariant
scalar Laplace operator, Eq.~(\ref{e:ScalarEigenvalue}).  Since the
scalar harmonics $Y[E_j]_{\mathbf{k}}$ are smooth eigenfunctions of
the Laplace operator, their gradients are automatically smooth
eigenfunctions having the same eigenvalues on these flat manifolds.
The normalization factors in
Eqs.~(\ref{e:TensorHarmonic0Def})--(\ref{e:TensorHarmonic3Def}) are
chosen to ensure that these harmonics satisfy the nice orthonormality
conditions
\begin{equation}
  \int_{- L _1/ 2}^{L_1 / 2} \int_{- L_2 / 2}^{L _2/ 2} \int_{- L_3 / 2}^{L_3 / 2} g_{ac}
  g_{bd}Y{[E_j]}^{ab}_{(A)\,\mathbf{k}}
  Y{[ E_j]}^{cd*}_{(B)\,\mathbf{k}'} \,d x\, d y\, d z =
  \delta_{AB}\,\,\delta_{{n_1} {n_1}'}\,\delta_{{n_2} {n_2}'}\,\delta_{{n_3} {n_3}'}\,,
\end{equation}
for $A=0,...,3$ and $B=0,...,3$.  The traces of these tensor
harmonics are given by,
\begin{eqnarray}
  g_{ab}Y[E_j]^{ab}_{(0)\mathbf{k}}&=&\sqrt{3} Y[E_j]_\mathbf{k},
  \label{e:TraceTensor0}\\
  g_{ab}Y[E_j]^{ab}_{(A)\mathbf{k}}&=& 0,\hspace{1em}\text{for}\hspace{1em}
  A=1,...,3,
  \label{e:TraceTensor1-3}
\end{eqnarray}
while the divergences are given by
\begin{eqnarray}
  \nabla_aY{[E_j]}^{ab}_{(0)\text{\textbf{k}}}&=&\tfrac{\kappa}{\sqrt{3}}
    Y[E_j]^b_{(0)\,\mathbf{k}},
    \label{e:TensorHarmonicDivergence0}\\
  \nabla_aY{[E_j]}^{ab}_{(1)\text{\textbf{k}}}&=&-\tfrac{2\kappa}{\sqrt{6}}
    Y[E_j]^b_{(0)\,\mathbf{k}},
  \label{e:TensorHarmonicDivergence1}\\
  \nabla_aY[E_j]^{ab}_{(2)\,\mathbf{k}}&=&-\tfrac{\kappa}{\sqrt{2}} Y[E_j]^b_{(1)\,\mathbf{k}}\,,
    \label{e:TensorHarmonicDivergence2}\\
  \nabla_aY[E_j]^{ab}_{(3)\,\mathbf{k}}&=&-\tfrac{\kappa}{\sqrt{2}} Y[E_j]^b_{(2)\,\mathbf{k}}\,.
    \label{e:TensorHarmonicDivergence3}
\end{eqnarray}

The construction of the class $A=4$ and $A=5$ symmetric second-rank
tensor harmonics is less straightforward.  Our approach to finding
these harmonics is analogous to the methods described in
Sec.~\ref{sec4} for deriving the class $A=1$ and $A=2$ vector
harmonics.  We have developed three different approaches that produce
tensor harmonics satisfying the three useful properties listed above.
The first approach produces the simplest expressions for the tensor
harmonics, but this approach only works on the space $E_1$.  A
somewhat more general approach can be used to derive fairly simple
expressions for these tensor harmonics on the spaces $E_1,...,E_5$, but
not on $E_6$.  We have also developed an even more general approach
capable of constructing tensor harmonics in all the $E_j$ spaces, but
the resulting expressions produced in this way are quite complicated.
Here we report the results of this general approach only for the
otherwise intractable $E_6$ case.

The first approach to constructing the needed tensor harmonics is
based on the fact that any covariantly constant tensor field $c^{ab}$
is invariant under the pure translation symmetry group of the
three-torus, $E_1$.  Therefore any tensor field of the form
$c^{ab}Y[E_1]_\mathbf{k}$ is an eigenfunction of the covariant Laplace
operator, and therefore a candidate tensor harmonic.  The choices for
six linearly independent symmetric tensors $c^{ab}$ will define the
six classes of tensor harmonics for this case.  The class $A=0$ and
$A=1$ tensor harmonics defined in Eqs.~(\ref{e:TensorHarmonic0Def})
and (\ref{e:TensorHarmonic1Def}) for the space $E_1$ have this form:
$Y[E_1]^{ab}_{(0)(\mathbf{k}}=\tfrac{1}{\sqrt{3}}g^{ab}
Y[E_1]_{\mathbf{k}}$ and
$Y[E_1]^{ab}_{(0)(\mathbf{k}}=\tfrac{1}{\sqrt{6}}\left(g^{ab}
-3\hat k^a\hat k^b\right)Y[E_1]_{\mathbf{k}}$.  All that is needed to
complete this simple approach are choices for four additional constant
tensors to define the class $A=2,...,5$ harmonics.  These choices are
easy to construct using the set of orthonormal vectors $\hat k^a$,
$\hat \ell^a$, and $\hat m^a$ constructed in Sec.~\ref{sec4}:
\begin{eqnarray}
  Y[E_1] ^{a b}_{(0)\,\mathbf{k}}  & = & \tfrac{1}{\sqrt{3}}\left(
  \hat k^a \hat k^b+\hat\ell^a\hat \ell^b+\hat m^a\hat m^b\right)
  Y[E_1]_\mathbf{k},  \label{e:TensorHarmonic0E1}\\
  Y[E_1]^{ab}_{ ( 1)\,\mathbf{k}} & = & \tfrac{1}{\sqrt{6}} \left( \hat\ell^a \hat\ell^b
  + \hat m^a \hat m^b -2 \hat k^a \hat k^b\right) Y[E_1]_\mathbf{k},
  \label{e:TensorHarmonic1E1}\\
  Y[E_1]^{ab}_{(2)\,\mathbf{k}} & = & -\tfrac{1}{\sqrt{2}} \left(
  \hat k^a \hat\ell^b + \hat k^b \hat\ell^a\right) Y[ E_1]_\mathbf{k},
  \label{e:TensorHarmonic2E1}\\
  Y[E_1]^{ab}_{(3)\,\mathbf{k}} & = & -\tfrac{1}{\sqrt{2}} \left(
  \hat k^a \hat m^b + \hat k^b \hat m^a\right) Y[ E_1]_\mathbf{k},
  \label{e:TensorHarmonic3E1}\\
  Y[E_1]^{ab}_{(4)\,\mathbf{k}} & = & \tfrac{1}{\sqrt{2}} \left(
  \hat\ell^a \hat\ell^b - \hat m^a \hat m^b\right) Y[ E_1]_\mathbf{k},
  \label{e:TensorHarmonic4E1}\\
  Y[E_1]^{ab}_{(5)\,\mathbf{k}} & = & \tfrac{1}{\sqrt{2}} \left(
  \hat\ell^a \hat m^b + \hat\ell^b \hat m^a\right) Y[ E_1]_\mathbf{k},
  \label{e:TensorHarmonic5E1}
\end{eqnarray}
The class $A=0,...,3$ harmonics given in
Eqs.~(\ref{e:TensorHarmonic0E1})--(\ref{e:TensorHarmonic3E1}) are
equivalent to the expressions given in
Eqs.~(\ref{e:TensorHarmonic0Def})--(\ref{e:TensorHarmonic3Def}).
All these tensor harmonics defined in
Eqs.~(\ref{e:TensorHarmonic0E1})--(\ref{e:TensorHarmonic5E1}) satisfy
each of the useful properties listed above.  They are eigenfunctions
of the covariant Laplace operator with eigenvalue $-\kappa^2$, and
satisfy the following orthonormality conditions,
\begin{equation}
  \int_{- L _1/ 2}^{L_1 / 2} \int_{- L_2 / 2}^{L _2/ 2} \int_{- L_3 / 2}^{L_3 / 2} g_{ac}g_{bd}Y{[
  E_1]}^{ab}_{(A)\,\mathbf{k}} Y{[ E_1]}^{cd*}_{(B)\,\mathbf{k}'} \,d x\, d y\, d z =
  \delta_{AB}\,\,\delta_{{n_1} {n_1}'}\,\delta_{{n_2} {n_2}'}\,\delta_{{n_3} {n_3}'}\,.
  \label{e:TensorOrthonormalityE1}
\end{equation}
The traces of these tensor harmonics on $E_1$ are given by,
\begin{eqnarray}
  g_{ab} Y[E_1] ^{a b}_{(0)\,\mathbf{k}}&=&\sqrt{3}Y[E_1]_\mathbf{k},
  \label{e:TensorTrace0E1}\\
  g_{ab}Y[E_1] ^{a b}_{(A)\,\mathbf{k}}&=& 0, \hspace{1em}\text{for}\hspace{1em}
  A=1,...,5,\label{e:TensorTrace1-5E1}
\end{eqnarray}
and their divergences are given by,
\begin{eqnarray}
  \nabla_aY{[E_1]}^{ab}_{(0)\,\mathbf{k}}&=&\tfrac{\kappa}{\sqrt{3}}
    Y[E_1]^b_{(0)\,\mathbf{k}}\,,
  \label{e:TensorDivergence0E1}\\
  \nabla_aY{[E_1]}^{ab}_{(1)\,\mathbf{k}}&=&-\tfrac{2\kappa}{\sqrt{6}}
    Y[E_1]^b_{(0)\,\mathbf{k}}\,,
  \label{e:TensorDivergence1E1}\\
  \nabla_aY[E_1]^{ab}_{(2)\,\mathbf{k}}&=&-\tfrac{\kappa}{\sqrt{2}}
  Y[E_1]^b_{(1)\,\mathbf{k}}\,,
    \label{e:TensorDivergence2E1}\\
    \nabla_aY[E_1]^{ab}_{(3)\,\mathbf{k}}&=&-\tfrac{\kappa}{\sqrt{2}}
    Y[E_1]^b_{(2)\,\mathbf{k}}\,,
    \label{e:TensorDivergence3E1}\\
    \nabla_aY[E_1]^{ab}_{(4)\,\mathbf{k}}&=&\nabla_aY[E_1]^{ab}_{(5)\,\mathbf{k}}=0.
    \label{e:TensorDivergence4-5E1}
\end{eqnarray}

The second approach to constructing tensor harmonics uses the fact
that the vector $\hat z^a$, the metric $g^{ab}$ and the tensor $\hat
z^a\hat z^b$ (where $\hat z^a$ is the unit vector along the $z$-axis)
are the only covariantly constant vector and tensor fields invariant
under all the symmetry groups of the manifolds $E_1,...,E_5$.
Therefore tensors like $g^{ab} Y[E_j]_\mathbf{k}$, $\hat z^a\hat
z^bY[E_j]_\mathbf{k}$, and $\hat z^a Y[E_j]^b_\mathbf{k}+\hat z^b
Y[E_j]^a_\mathbf{k}$ (for $j=1,...,5$) are eigenfunctions of the
covariant Laplace operator that can be used in the construction of the
tensor harmonics on these spaces.  We note that $\hat
z^a\nabla_aY[E_j]_\mathbf{k}=i\,k_3 Y[E_j]_\mathbf{k}$ and $\hat
z^a\nabla_aY[E_j]^b_{(A)\,\mathbf{k}}=i\,k_3
Y[E_j]^b_{(A)\,\mathbf{k}}$ in all of these cases.  The tensor
harmonics for these manifolds can therefore be taken to be
\begin{eqnarray}
 Y[E_j]^{ab}_{(0)\,\mathbf{k}}&=&\tfrac{1}{\sqrt{3}}g^{ab}Y[E_j]_{\mathbf{k}},
  \label{e:TensorHarmonic0}\\
  Y[E_j]^{ab}_{(1)\,\mathbf{k}}&=&\tfrac{1}{\kappa^{2}\sqrt{6}}
  \left(3
  \nabla^a\nabla^bY[E_j]_{\mathbf{k}}
  +g^{ab}\kappa^{2}Y[E_j]_{\mathbf{k}}\right),
  \label{e:TensorHarmonic1}\\
  Y[E_j]^{ab}_{(2)\,\mathbf{k}}&=&\tfrac{1}{\kappa\sqrt{2}}
  \left(\nabla^aY[E_j]^b_{(1)\,\mathbf{k}}
  +\nabla^bY[E_j]^a_{(1)\,\mathbf{k}}\right),
  \label{e:TensorHarmonic2}\\
  Y[E_j]^{ab}_{(3)\,\mathbf{k}}&=&\tfrac{1}{\kappa\sqrt{2}}
  \left(\nabla^aY[E_j]^b_{(2)\,\mathbf{k}}
  +\nabla^bY[E_j]^a_{(2)\,\mathbf{k}}\right),
  \label{e:TensorHarmonic3}\\
  Y[E_j]^{ab}_{(4)\,\mathbf{k}} & = & \tfrac{1}{2\kappa\, k_3^2
    \sqrt{2(\kappa^2 - k_3^2)}}
    \left[ k_3\left(\kappa^2+k_3^2\right)
    \biggl( \nabla^a Y[E_j]^b_{(1)\,\mathbf{k}}
  + \nabla^b Y[E_j]^a_{(1)\,\mathbf{k}} \right)\nonumber\\
  &&\qquad\qquad\qquad\left.+i\kappa^2\left(\kappa^2-3k_3^2\right)
  \left(\hat z^a Y[E_j]^b_{(1)\,\mathbf{k}}
  +\hat z^b Y[E_j]^a_{(1)\,\mathbf{k}}\right)\right.\nonumber\\
  &&\qquad\qquad\qquad\left.
  +2\kappa\sqrt{\kappa^2-k_3^2}
  \left(\kappa^2\hat z^a\hat z^b-
   k_3^2g^{ab}\right)Y[E_j]_\mathbf{k}\right],
    \label{e:TensorHarmonic4}\\
  Y[E_j]^{ab}_{(5)\,\mathbf{k}} & = & \tfrac{1}{\kappa
    \sqrt{2 \left( \kappa^2 - k_3^2\right)}} \left[ k_3
    \left( \nabla^a Y[E_j]^b_{(2)\,\mathbf{k}}
  + \nabla^b Y[E_j]^a_{(2)\,\mathbf{k}} \right)
  - i\kappa^2 \left(\hat z^a Y[E_j]^b_{(2)\,\mathbf{k}}
  +\hat z^b Y[E_j]^a_{(2)\,\mathbf{k}}\right) \right]\,, \quad \quad
    \label{e:TensorHarmonic5}
\end{eqnarray}
so long as the vector harmonics $Y[E_j]^a_{(1)\,\mathbf{k}}$ and
$Y[E_j]^a_{(2)\,\mathbf{k}}$ are well defined, and so long as
$\kappa^2\neq 0$, $k_3^2\neq 0$, and $\kappa^2\neq k_3^2$.  When
$\kappa^2=0$, the tensor harmonics must be spatially constant tensor
fields.  Six linearly independent spatially constant tensor fields
exist in the space $E_1$, so any orthonormal set can be used as
$\kappa=0$ harmonics in that space.  In the spaces $E_2,...,E_5$ the
only spatially constant tensor fields are $g^{ab}$ and $g^{ab}-3\hat
z^a\hat z^b$, so (suitably normalized) they are the only $\kappa=0$
tensor harmonics on those spaces. When $k_3=0$ and $\kappa\neq 0$ the
tensor harmonics are independent of $z$, i.e. they depend only on $x$
and $y$. In this case the expressions for the vector harmonics are
given in Sec.~\ref{sec4} and the expressions for the $A=0,...,3$ and
$5$ tensor harmonics are given in
Eqs.~(\ref{e:TensorHarmonic0})--(\ref{e:TensorHarmonic3}) and
(\ref{e:TensorHarmonic5}).  However the class $A=4$
tensor harmonics are not well defined in this case.  Finally if
$\kappa^2=k_3^2$ and $\kappa\neq 0$ the tensor harmonics are
independent of $x$ and $y$, i.e. the harmonics depend only on $z$.  In this case the class
$A=1$ and $A=2$ vector harmonics do not exist, so the only well
defined tensor harmonics are the class $A=0$ and $A=1$ harmonics. It is straightforward
to show that the tensor harmonics defined in
Eqs.~(\ref{e:TensorHarmonic0})--(\ref{e:TensorHarmonic5}) are
eigenfunctions of the covariant Laplace operator with eigenvalue
$-\kappa^2$, satisfy orthonormality conditions that are the analogs of
those given in Eq.~(\ref{e:TensorOrthonormalityE1}), satisfy trace
identities that are the analogs of those given in
Eqs.~(\ref{e:TensorTrace0E1}) and (\ref{e:TensorTrace1-5E1}), and
divergence identities that are the analogs of those given in
Eqs.~(\ref{e:TensorDivergence0E1})--(\ref{e:TensorDivergence4-5E1}).

The third (most general) approach to constructing tensor harmonics on
these compact orientable flat spaces, $E_j$, is based on the fact that
these spaces are quotients, $\mathbb{E}^3/\Gamma$, of Euclidean space
$\mathbb{E}^3$ and an isometry group $\Gamma$.  Therefore the problem
of finding tensor harmonics on the $E_j$ spaces is equivalent to
finding the $\Gamma$-invariant tensor harmonics on $\mathbb{E}^3$.
This can be done using the method developed in
Ref.~\cite{riazuelo2004cosmic} to derive expressions for the scalar
harmonics.  The {\it Tensor Action Lemma} and the {\it Tensor
  Invariance Lemma} described in the Appendix to this paper provide
the tools needed to construct linear combinations of the
$\mathbb{E}^3$ tensor harmonics that are invariant under all the
elements of the symmetry groups $\Gamma$.  The tensor harmonics
obtained using this method on the spaces $E_1,...,E_5$ are more
complicated than those derived using the first two approaches.  So
here we report only the tensor harmonics
$Y[E_6]^{ab}_{(A)\text{\textbf{k}}}$ with $A = 0,...,5$ obtained in
this way for the otherwise intractable $E_6$ case:
\begin{eqnarray}
  Y[E_6]^{ab}_{(A)\,\mathbf{k}} & = & \tfrac{1}{2} \biggl[ Y[ E_1]^{ab}_{(A)\,\mathbf{k}}
    + ( - 1)^{n_1 - n_2}  M_1[E_6]^a{}_c M_1[E_6]^b{}_d Y[E_1]^{cd}_{(A)\,\mathbf{k} M_1[E_6]}
\nonumber \\ && \quad
    +( - 1)^{n_2 - n_3}  M_2[E_6]^a{}_c M_2[E_6]^b{}_d Y[E_1]^{cd}_{(A)\,\mathbf{k} M_2[E_6]}
    \nonumber \\
    && \quad
    + ( - 1)^{n_3 - n_1}  M_3[E_6]^a{}_c M_3[E_6]^b{}_d Y[E_1]^{cd}_{(A)\,\mathbf{k} M_3[E_6]}
    \biggr]
   \nonumber\\
   &&
   \text{for}\hspace{1em}\left( n_1, n_2\in \mathbb{Z}^+, n_3 \in \mathbb{Z} \right)
   \hspace{1em}
   \text{or} \hspace{1em} \left( n_1 = 0, n_2, n_3 \in \mathbb{Z}^+ \right) \hspace{1em}
   \nonumber\\
   &&
   \text{or} \hspace{1em} \left( n_2 = 0, n_1, n_3 \in
  \mathbb{Z}^+ \right), \label{e:TensorHarmonicsE6a}\\
  Y[E_6]^{ab}_{(A)\,(k_1, 0, 0)} & = & \tfrac{1}{\sqrt{2}} \left[ Y[E_1]^{ab}_{(A)\,(k_1, 0, 0)}
    + Y[E_1]^{ab}_{(A)\,(- k_1, 0, 0)}\right] \hspace{1em}
  \text{for} \hspace{1em} \left(n_1 \in 2 \mathbb{Z}^+, n_2 = n_3 = 0\right),
  \label{e:TensorHarmonicsE6b}\\
  Y[E_6]^{ab}_{(A)\,(0, k_2, 0)} & = & \tfrac{1}{\sqrt{2}} \left[ Y[E_1]^{ab}_{(A)\,(0, k_2, 0)}
    + Y[E_1]^{ab}_{(A)\,(0, - k_2, 0)}\right] \hspace{1em}
  \text{for} \hspace{1em} \left(n_2  \in 2 \mathbb{Z}^+, n_1 = n_3 = 0\right),
  \label{e:TensorHarmonicsE6c}\\
  Y[E_6]^{ab}_{(A)\,(0, 0, k_3)} & = & \tfrac{1}{\sqrt{2}} \left[ Y[E_1]^{ab}_{(A)\,(0, 0, k_3)}
    + Y[E_1]^{ab}_{(A)\,(0, 0, - k_3)}\right] \hspace{1em}
  \text{for} \hspace{1em} \left(n_3 \in 2 \mathbb{Z}^+, n_1 = n_2 = 0\right). \quad \quad
  \label{e:TensorHarmonicsE6d}
\end{eqnarray}
where $Y[E_1]^{ab}_{(A)\,\mathbf{k}}$ are the tensor harmonics on
$E_1$ described above. We note that
  the expressions in
  Eqs.~(\ref{e:TensorHarmonicsE6a})--(\ref{e:TensorHarmonicsE6d}) for
  the $A=0,...,3$ cases are equivalent to
  Eqs.~(\ref{e:TensorHarmonic0Def})--(\ref{e:TensorHarmonic3Def}). It
is straightforward to show that the tensor harmonics defined in
Eqs.~(\ref{e:TensorHarmonicsE6a})--(\ref{e:TensorHarmonicsE6d}) are
eigenfunctions of the covariant Laplace operator with eigenvalue
$-\kappa^2$, satisfy orthonormality conditions that are the analogs of
those given in Eq.~(\ref{e:TensorOrthonormalityE1}), satisfy the trace
conditions given in
Eqs.~(\ref{e:TensorTrace0E1})--(\ref{e:TensorTrace1-5E1}), and satisfy
divergence identities that are the analogs of those given in
Eqs.~(\ref{e:TensorDivergence0E1})--(\ref{e:TensorDivergence4-5E1}).
The proofs of these properties use the fact that the matrices
$M^a{}_b$ that define the symmetries of these spaces preserve the
structure of the metric: $g_{ab}=M^c{}_aM^d{}_bg_{cd}$ and $g^{ab}=M^a{}_cM^b{}_dg^{cd}$.

\section{\label{sec6}Summary and Discussion}

This paper introduces a uniform notation for the scalar, vector and
tensor harmonics on the flat compact orientable three-manifolds $E_1, E_2,
..., E_6$.  The $Y[E_j]_\mathbf{k}$ represent the scalar harmonics on
the space $E_j$ (for $j=1,...,6$) with parameters
$\mathbf{k}=k_a=(k_1,k_2,k_3)$.  To enforce the appropriate
periodicities, these parameters must be given by
$k_a=2\pi\left(\tfrac{n_1}{L_1},\tfrac{n_2}{L_2},\tfrac{n_3}{L_3}\right)$
in the spaces $E_1$, $E_2$, $E_3$ and $E_6$, and
$k_a=2\pi\left(-\tfrac{n_2}{L_1},
\tfrac{2n_1-n_2}{\sqrt{3}L_2},\tfrac{n_3}{L_3}\right)$ in the spaces
$E_4$ and $E_5$, where the $n_1$, $n_2$ and $n_3$ are integers and
$L_1$, $L_2$ and $L_3$ are the periodicity lengths in each dimension.
(The lengths $L_1$ and $L_2$ must also be equal in the spaces $E_3$,
$E_4$ and $E_5$ to preserve the symmetries.)  Explicit expressions for
these scalar harmonics, constructed originally in
Ref.~\cite{riazuelo2004cosmic}, are summarized in Sec.~\ref{sec3}.
The $Y[E_j]^a_{(A)\,\mathbf{k}}$ represent the three classes of vector
harmonics, with $A=0,1,2$, on the space $E_j$.  Explicit expressions
for these harmonics are constructed in Sec.~\ref{sec4}.  Finally the
$Y[E_j]^{ab}_{(A)\,\mathbf{k}}$ represent the six classes of symmetric
second-rank tensor harmonics, with $A=0,1,...,5$, on the space $E_j$.
Explicit expressions for these harmonics are constructed in
Sec.~\ref{sec5}.  All these harmonics satisfy a number of useful
properties:

The scalar harmonics on the six compact orientable flat
three-manifolds $E_j$ (for $j=1,2, ..., 6$) are eigenfunctions of the
covariant Laplace operator,
\begin{equation}
  \nabla^b \nabla_b Y[ E_j]_{\mathbf{k}} = - \kappa^2 Y[ E_j],
\end{equation}
and satisfy the orthonormality conditions
\begin{equation}
  \int_{- L _1/ 2}^{L_1 / 2} \int_{- L_2 / 2}^{L _2/ 2} \int_{- L_3 / 2}^{L_3 / 2} Y{[
  E_j]}_{\text{\textbf{k}}} Y{[ E_j]}^*_{\text{\textbf{k}}'} \,d x\, d y\, d z =
  \delta_{{n_1} {n_1}'}\,\delta_{{n_2} {n_2}'}\,\delta_{{n_3} {n_3}'}\,.
\end{equation}
Therefore it is easy to represent any (square integrable) scalar field on these
manifolds in terms of these basis functions:
\begin{equation}
  f(\mathbf{x})=\sum_{n_1}\sum_{n_2}\sum_{n_3}f_{\,\mathbf{k}} Y[E_j],
\end{equation}
where the coefficients $f_{\,\mathbf{k}}$ are given by
\begin{equation}
  f_{\,\mathbf{k}}=\int_{- L _1/ 2}^{L_1 / 2} \int_{- L_2 / 2}^{L _2/ 2} \int_{- L_3 / 2}^{L_3 / 2}
  f(\mathbf{x})\,Y{[ E_j]}^*_{\text{\textbf{k}}} \,d x\, d y\, d z.
\end{equation}

The vector harmonics on the six compact orientable flat
three-manifolds $E_j$ (for $j=1,2, ..., 6$) are eigenfunctions of the
covariant Laplace operator,
\begin{equation}
  \nabla^b \nabla_b Y[ E_j]^a_{(A)\,\mathbf{k}} = - \kappa^2 Y[ E_j]^a_{(A)\,\mathbf{k}}\,,
\end{equation}
and satisfy the orthonormality conditions
\begin{equation}
  \int_{- L _1/ 2}^{L_1 / 2} \int_{- L_2 / 2}^{L _2/ 2} \int_{- L_3 / 2}^{L_3 / 2} g_{ab}
  Y[E_j]^a_{(A)\,\mathbf{k}} Y[ E_j]^{b*}_{(B)\,\mathbf{k}'} \,d x\, d y\, d z =
  \delta_{AB}\,\,\delta_{{n_1} {n_1}'}\,\delta_{{n_2} {n_2}'}\,\delta_{{n_3} {n_3}'}\,.
\end{equation}
Therefore it is easy to represent any (square integrable) vector field on these
manifolds in terms of these basis functions:
\begin{equation}
  v^a(\mathbf{x})=\sum_{n_1}\sum_{n_2}\sum_{n_3}\sum_A v_{(A)\,\mathbf{k}}
  Y[E_j]^a_{(A)\,\mathbf{k}},
\end{equation}
where the coefficients $v_{(A)\,\mathbf{k}}$ are given by
\begin{equation}
  v_{(A)\,\mathbf{k}}=\int_{- L _1/ 2}^{L_1 / 2} \int_{- L_2 / 2}^{L _2/ 2} \int_{- L_3 / 2}^{L_3 / 2}
  g_{ab}v^a(\mathbf{x})\,Y[ E_j]^{b*}_{(A)\,\mathbf{k}} \,d x\, d y\, d z.
\end{equation}
These vector harmonics also satisfy the divergence identities,
\begin{eqnarray}
  \nabla_aY{[E_j]}^a_{(0)\,\mathbf{k}}&=&-\kappa Y[E_j]_{\mathbf{k}},
  \label{e:VectorDivergenceClass0}\\
  \nabla_aY{[E_j]}^a_{(1)\,\mathbf{k}}&=&
  \nabla_aY{[E_j]}^a_{(2)\,\mathbf{k}}=0,\label{e:VectorDivergenceClass2}
\end{eqnarray}
on each of the six flat compact orientable three-manifolds.  The
vanishing divergences of the class $A=1$ and $A=2$ vector harmonics
make them useful for constructing representations of the
electromagnetic field.

The tensor harmonics on the six compact orientable flat
three-manifolds $E_j$ (for $j=1,2, ..., 6$) are
eigenfunctions of the covariant Laplace operator,
\begin{equation}
  \nabla^c\nabla_c Y[E_j]^{ab}_{(A)\,\mathbf{k}} = -\kappa^2 Y[E_j]^{ab}_{(A)\,\mathbf{k}},
  \label{e:TensorLaplaceEigenvalueEqEj}
\end{equation}
and satisfy the orthonormality conditions
\begin{equation}
  \int_{- L _1/ 2}^{L_1 / 2} \int_{- L_2 / 2}^{L _2/ 2} \int_{- L_3 / 2}^{L_3 / 2} g_{ac}
  g_{bd}Y{[E_j]}^{ab}_{(A)\,\mathbf{k}}
  Y{[ E_j]}^{cd*}_{(B)\,\mathbf{k}'} \,d x\, d y\, d z =
  \delta_{AB}\,\,\delta_{{n_1} {n_1}'}\,\delta_{{n_2} {n_2}'}\,\delta_{{n_3} {n_3}'}\,,
\end{equation}
for $A=0,...,5$ and $B=0,...,5$.  Therefore it is easy to represent
any (square integrable) symmetric second-rank tensor field on these
manifolds in terms of these basis functions:
\begin{equation}
  t^{\,ab}(\mathbf{x})=\sum_{n_1}\sum_{n_2}\sum_{n_3}\sum_A t_{(A)\,\mathbf{k}}
  Y[E_j]^{ab}_{(A)\,\mathbf{k}},
\end{equation}
where the coefficients $t_{(A)\,\mathbf{k}}$ are given by
\begin{equation}
  t_{(A)\,\mathbf{k}}=\int_{- L _1/ 2}^{L_1 / 2} \int_{- L_2 / 2}^{L _2/ 2} \int_{- L_3 / 2}^{L_3 / 2}
  g_{ac}\,g_{bd}\,t^{\,ab}(\mathbf{x})\,Y[ E_j]^{cd*}_{(A)\,\mathbf{k}} \,d x\, d y\, d z.
\end{equation}
The traces of these tensor harmonics are given by,
\begin{eqnarray}
  g_{ab}Y[E_j]^{ab}_{(0)\,\mathbf{k}}&=&\sqrt{3} Y[E_j]_\mathbf{k},
  \label{e:TraceTensor0Sum}\\
  g_{ab}Y[E_j]^{ab}_{(A)\,\mathbf{k}}&=& 0,\hspace{1em}\text{for}\hspace{1em}
  A=1,...,5,
  \label{e:TraceTensor1-3Sum}
\end{eqnarray}
while the divergences are given by
\begin{eqnarray}
  \nabla_aY{[E_j]}^{ab}_{(0)\,\mathbf{k}}&=&\tfrac{\kappa}{\sqrt{3}}
    Y[E_j]^b_{(0)\,\mathbf{k}},
    \label{e:TensorHarmonicDivergence0Sum}\\
  \nabla_aY{[E_j]}^{ab}_{(1)\,\mathbf{k}}&=&-\tfrac{2\kappa}{\sqrt{6}}
    Y[E_j]^b_{(0)\,\mathbf{k}},
  \label{e:TensorHarmonicDivergence1Sum}\\
  \nabla_aY[E_j]^{ab}_{(2)\,\mathbf{k}}&=&-\tfrac{\kappa}{\sqrt{2}} Y[E_j]^b_{(1)\,\mathbf{k}}\,,
    \label{e:TensorHarmonicDivergence2Sum}\\
  \nabla_aY[E_j]^{ab}_{(3)\,\mathbf{k}}&=&-\tfrac{\kappa}{\sqrt{2}} Y[E_j]^b_{(2)\,\mathbf{k}}\,.
  \label{e:TensorHarmonicDivergence3Sum}\\
  \nabla_aY[E_j]^{ab}_{(4)\,\mathbf{k}}&=&\nabla_aY[E_j]^{ab}_{(5)\,\mathbf{k}}=0.
  \label{e:TensorHarmonicDivergence4-5Sum}
\end{eqnarray}
The vanishing traces and divergences of the class $A=4$ and $A=5$
tensor harmonics make them useful for constructing representations of
the dynamical gravitational wave degrees of freedom of the
gravitational field.

Cosmological observations provide measurements of
  physical properties, like the temperature of the cosmic microwave
  radiation, as functions on the two-sphere of our sky.  The
  components of the three-dimensional scalar, vector and tensor
  harmonics used in the construction of a cosmological model would
  need to be projected onto such a two-sphere to facilitate
  comparisons with the observations.  The analysis needed to do this
  projection has been carried out for the scalar harmonics on these
  flat three-dimensionsl manifolds in Ref.~\cite{riazuelo2004cosmic}.
  Analogous projections of the vector and tensor harmonics onto the
  appropriate spin-weighted spherical harmonic bases could also be carried
  out, but we defer that analysis to a future paper. 

\appendix
\section{Vector and Tensor Action and
  Invariance Lemmas\label{Appendix}}

All multi-connected three-dimensional flat spaces $E_j$ are quotients,
$\mathbb{E}^3/\Gamma$, of Euclidean space $\mathbb{E}^3$ by an
isometry group $\Gamma$.  The problem of finding the scalar, vector
and tensor harmonics on $\mathbb{E}^3/\Gamma$ is equivalent to the
problem of finding the $\Gamma$-invariant harmonics of $\mathbb{E}^3$.
Two technical lemmas were developed in Ref.~\cite{riazuelo2004cosmic}
to facilitate the construction of the scalar harmonics on these
spaces.  The first of these, the \textit{Action Lemma}, determines how
an element of the isometry group $\Gamma$ transforms one scalar
harmonic on $\mathbb{E}^3$ into another.  The second, the
\textit{Invariance Lemma}, constructs a harmonic that is invariant
under the repeated action of any element of the isometry group.
Together these lemmas were used in Ref.~\cite{riazuelo2004cosmic} to
derive the explicit expressions for the scalar harmonics summarized in
Sec.~\ref{sec3} of this paper.  Those lemmas for scalar harmonics are
generalized here to action and invariance lemmas for the vector and
tensor harmonics on these flat spaces.

Every isometry $\gamma\in\Gamma$ on these manifolds consists of a
reflection/rotation followed by a translation, i.e. they are
transformations of the form
$\mathbf{x}'=\mathbf{M}\cdot\mathbf{x}+\mathbf{T}$ where $\mathbf{M}$
is a unitary matrix and $\mathbf{T}$ is a vector, or in component
notation $x'^a=M^{\,a}{}_b x^b + T^a$.

\begin{lemma}
  \label{l:lemma1}
  (Vector Action Lemma)
The natural action of an isometry $\gamma \in \Gamma$ of Euclidean
space $\mathbb{E}^3$ takes a vector harmonic
$\textbf{\text{u}}
Y_{\mathbf{{k}}}(\mathbf{x})=\mathbf{u}\,e^{i\,\mathbf{k}\cdot\mathbf{x}}$
to another vector harmonic $\left( \mathbf{M}\cdot \mathbf{u} \right)
e^{i\, \mathbf{k} \cdot \mathbf{T}} Y_{\mathbf{k} \mathbf{M}} \left(
\mathbf{{x}} \right)$, where $\mathbf{u}$ is any constant vector.
\end{lemma}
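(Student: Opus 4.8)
The plan is to obtain the result by a direct substitution into the plane-wave form, so the proof is essentially a one-line computation once the natural action is written out explicitly. First I would fix notation: the natural action of an isometry $\gamma:\mathbf{x}\mapsto\mathbf{x}'=\mathbf{M}\cdot\mathbf{x}+\mathbf{T}$ on a vector field $\mathbf{V}$ transports the argument by $\gamma$ exactly as in the scalar \emph{Action Lemma} of Ref.~\cite{riazuelo2004cosmic} and simultaneously acts on the vector value by the tangent map of $\gamma$, which, since $\gamma$ is affine, is the constant matrix $\mathbf{M}$; in components, $(\gamma\cdot V)^a(\mathbf{x}) = M^a{}_b\,V^b(\mathbf{M}\cdot\mathbf{x}+\mathbf{T})$. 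This is the rule that, iterated and suitably combined, produces the $E_6$ expressions in Eqs.~(\ref{e:VectorHarmonicsE6a})--(\ref{e:VectorHarmonicsE6d}).

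Next I would substitute $V^b(\mathbf{x}) = u^b\,e^{i\,\mathbf{k}\cdot\mathbf{x}}$ with $\mathbf{u}$ constant. The value factor becomes the constant vector $M^a{}_b u^b = (\mathbf{M}\cdot\mathbf{u})^a$, while the exponent becomes $i\,\mathbf{k}\cdot(\mathbf{M}\cdot\mathbf{x}+\mathbf{T}) = i\,\mathbf{k}\cdot\mathbf{T} + i\,k_a M^a{}_b\, x^b$. The one step that is more than bookkeeping is recognizing $k_a M^a{}_b\, x^b = (\mathbf{k}\mathbf{M})_b\, x^b = (\mathbf{k}\mathbf{M})\cdot\mathbf{x}$, which is just the definition of the right action $\mathbf{k}\mapsto\mathbf{k}\mathbf{M}$ already used in Sec.~\ref{sec3}; hence $e^{\,i\,k_a M^a{}_b x^b} = Y_{\mathbf{k}\mathbf{M}}(\mathbf{x})$ in the (unit) normalization used in the lemma. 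Pulling the constant phase $e^{\,i\,\mathbf{k}\cdot\mathbf{T}}$ out front then gives $(\gamma\cdot[\mathbf{u}\,Y_{\mathbf{k}}])(\mathbf{x}) = (\mathbf{M}\cdot\mathbf{u})\,e^{\,i\,\mathbf{k}\cdot\mathbf{T}}\,Y_{\mathbf{k}\mathbf{M}}(\mathbf{x})$, which is the claimed formula.

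It remains to confirm that the right-hand side really is \emph{another vector harmonic}, i.e.\ an eigenfunction of $\nabla^c\nabla_c$ with the same eigenvalue: since $\mathbf{M}\cdot\mathbf{u}$ is a constant vector and $e^{\,i\,\mathbf{k}\cdot\mathbf{T}}$ a constant scalar, $(\mathbf{M}\cdot\mathbf{u})\,e^{\,i\,\mathbf{k}\cdot\mathbf{T}}\,Y_{\mathbf{k}\mathbf{M}}$ is again of the form (constant vector)$\,\times\,$(plane wave), hence satisfies Eq.~(\ref{e:VectorLaplaceEigenvalueEq}) with eigenvalue $-\mathbf{g}^{-1}(\mathbf{k}\mathbf{M},\mathbf{k}\mathbf{M}) = -\kappa^2$, the last equality being the metric invariance $\mathbf{g}^{-1}(\mathbf{k}\mathbf{M},\mathbf{k}\mathbf{M}) = \mathbf{g}^{-1}(\mathbf{k},\mathbf{k})$ already recorded in Sec.~\ref{sec3}. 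I expect no genuine obstacle here---the whole argument is a single substitution---the only points requiring care being the conventions: left multiplication $\mathbf{u}\mapsto\mathbf{M}\cdot\mathbf{u}$ on the vector value versus the right action $\mathbf{k}\mapsto\mathbf{k}\mathbf{M}$ on the wave covector, and the orthogonality $\mathbf{M}^{-1}=\mathbf{M}^{\mathrm{T}}$, which is exactly what makes these two occurrences of $\mathbf{M}$ mutually consistent and the eigenvalue invariant.
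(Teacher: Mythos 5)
Your proof is correct and follows essentially the same route as the paper's: a direct substitution of the plane wave into the natural action $\mathbf{x}\mapsto\mathbf{M}\cdot\mathbf{x}+\mathbf{T}$, $\mathbf{u}\mapsto\mathbf{M}\cdot\mathbf{u}$, followed by splitting off the constant phase $e^{i\,\mathbf{k}\cdot\mathbf{T}}$ and identifying $e^{i\,\mathbf{k}\mathbf{M}\cdot\mathbf{x}}=Y_{\mathbf{k}\mathbf{M}}$. Your closing check that the image is again an eigenfunction with eigenvalue $-\kappa^2$ (via $\mathbf{g}^{-1}(\mathbf{k}\mathbf{M},\mathbf{k}\mathbf{M})=\mathbf{g}^{-1}(\mathbf{k},\mathbf{k})$) is a harmless addition the paper leaves to its Sec.~\ref{sec3} discussion.
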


\begin{proof}
  The action of the isometry $\gamma$ on $\mathbf{u}\,Y_\mathbf{k}$ is
  given by:
  \[ \mathbf{u}\, Y_{\mathbf{k}} \mapsto
    \gamma \left(\mathbf{u}\, e^{i\, \mathbf{k} \cdot \mathbf{x}} \right)
  = \gamma \left( \mathbf{u}  \right) \gamma \left( e^{i\,
    \mathbf{k} \cdot \mathbf{x}} \right)
  = \left( \mathbf{M}\cdot \mathbf{u} \right)
     e^{i\, \mathbf{k} \cdot \left( \mathbf{M}
    \cdot\mathbf{x} + \mathbf{T} \right)}
  = \left( \mathbf{M}\cdot \mathbf{u}  \right) e^{\text{i} \mathbf{k}
     \cdot \mathbf{T}} e^{i\, \mathbf{k} \mathbf{M}\cdot
     \mathbf{x}}
  = \left( \mathbf{M}\cdot \mathbf{u}  \right)
      e^{i\, \mathbf{k}\cdot\mathbf{T}}
     Y_{\mathbf{k} \mathbf{ M}} \left( \mathbf{x} \right), \]
  where we have used $\gamma \left( \mathbf{u}  \right) =
  \mathbf{u}' = \mathbf{M}\cdot \mathbf{u} $.
\end{proof}

\begin{lemma}
  \label{l:lemma2}
  (Vector Invariance Lemma) If $\gamma$ is an isometry of
  $\mathbb{E}^3$ with matrix part $\mathbf{M}$ and translational part
  $\mathbf{T}$, if $\mathbf{u}\,Y_{\mathbf{k}}$ is a vector
  harmonic on $\mathbb{E}^3$, and if $n$ is the smallest positive
  integer such that $\mathbf{k} = \mathbf{k}\mathbf{ M}^n$ (typically
  $n$ is simply the order of the matrix $\mathbf{M}$), then the action
  of $\gamma$
  \begin{enumeratenumeric}
  \item preserves the $n$-dimensional space of harmonics spanned by \\
    $\left\{ \mathbf{u}\, Y_{\mathbf{k}},
    \left(\mathbf{M}\cdot\mathbf{u}\right) Y_{\mathbf{k}\mathbf{M}},
      \cdots, \left(\mathbf{M}^{n - 1}\cdot \mathbf{u}\right)
      Y_{\mathbf{k} \mathbf{M}^{n - 1}} \right\}$ as a set, and

    \item leaves invariant the harmonic,
    $ a_0\, \mathbf{u}\, Y_{\mathbf{k}} + a_1 \left(
    \mathbf{M}\cdot \mathbf{u}  \right) Y_{\mathbf{k}\mathbf{M}}
    + \cdots + a_{n- 1} \left(  \mathbf{M}^{n - 1}\cdot \mathbf{u} \right)
       Y_{\mathbf{k}\mathbf{M}^{n - 1}}, $
    if and only if
    $ a_{j + 1} = e^{i\, \mathbf{k}\mathbf{M}^j\cdot \mathbf{T}} a_j$
       for each $j$ (mod  $n$).
  \end{enumeratenumeric}
\end{lemma}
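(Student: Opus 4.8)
The plan is to reduce both parts to repeated application of the Vector Action Lemma (Lemma~\ref{l:lemma1}), so that $\gamma$ acts on the listed basis as nothing more than a weighted cyclic shift. First I would apply that lemma with constant vector $\mathbf{M}^j\cdot\mathbf{u}$ and wave vector $\mathbf{k}\mathbf{M}^j$ to obtain, for every $j$,
\[
 \gamma\bigl((\mathbf{M}^j\cdot\mathbf{u})\,Y_{\mathbf{k}\mathbf{M}^j}\bigr)
   = e^{\,i\,\mathbf{k}\mathbf{M}^j\cdot\mathbf{T}}\,(\mathbf{M}^{j+1}\cdot\mathbf{u})\,Y_{\mathbf{k}\mathbf{M}^{j+1}} .
\]
For $0\le j\le n-2$ this is a nonzero multiple of the next basis harmonic, and for $j=n-1$ it equals $e^{\,i\,\mathbf{k}\mathbf{M}^{n-1}\cdot\mathbf{T}}(\mathbf{M}^n\cdot\mathbf{u})\,Y_{\mathbf{k}\mathbf{M}^{n}}$, which reduces to $e^{\,i\,\mathbf{k}\mathbf{M}^{n-1}\cdot\mathbf{T}}\,\mathbf{u}\,Y_{\mathbf{k}}$ once we invoke $\mathbf{k}\mathbf{M}^n=\mathbf{k}$ (the defining property of $n$) together with $\mathbf{M}^n\cdot\mathbf{u}=\mathbf{u}$ (automatic when $n$ is the order of $\mathbf{M}$, which is the case in the applications; otherwise one restricts $\mathbf{u}$ to the subspace fixed by the stabilizer of $\mathbf{k}$ in $\langle\mathbf{M}\rangle$). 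Since $\gamma$ is a bijection carrying each of the $n$ listed harmonics to a nonzero scalar multiple of another, it permutes them up to scalars and hence maps their span onto itself; this is part~1.

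For part~2 I would apply $\gamma$ term by term to the general combination $v=\sum_{j=0}^{n-1}a_j\,(\mathbf{M}^j\cdot\mathbf{u})\,Y_{\mathbf{k}\mathbf{M}^j}$ using the displayed identity, then reindex $j\mapsto j-1$ modulo $n$ to get
\[
 \gamma(v)=\sum_{j=0}^{n-1} a_{j-1}\,e^{\,i\,\mathbf{k}\mathbf{M}^{j-1}\cdot\mathbf{T}}\,(\mathbf{M}^{j}\cdot\mathbf{u})\,Y_{\mathbf{k}\mathbf{M}^{j}},
\]
all indices read mod $n$. The key observation is that the $n$ harmonics $(\mathbf{M}^j\cdot\mathbf{u})\,Y_{\mathbf{k}\mathbf{M}^j}$ are linearly independent: by minimality of $n$ the wave vectors $\mathbf{k}\mathbf{M}^j$, $j=0,\dots,n-1$, are pairwise distinct, so the plane waves $e^{\,i\,\mathbf{k}\mathbf{M}^j\cdot\mathbf{x}}$ are linearly independent as functions on $\mathbb{E}^3$, and $\mathbf{M}$ being invertible keeps each vector coefficient $\mathbf{M}^j\cdot\mathbf{u}$ nonzero. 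Equating the coefficient of each independent harmonic in $\gamma(v)=v$ then forces $a_j=a_{j-1}\,e^{\,i\,\mathbf{k}\mathbf{M}^{j-1}\cdot\mathbf{T}}$ for all $j$ mod $n$, i.e.\ $a_{j+1}=e^{\,i\,\mathbf{k}\mathbf{M}^{j}\cdot\mathbf{T}}a_j$; conversely, if this recursion holds, the same comparison shows $\gamma(v)=v$, which gives the ``if and only if''.

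The only step needing genuine care, rather than bookkeeping, is the wrap-around at $j=n-1$: one must be sure that $n$ applications of the shift return exactly to $\mathbf{u}\,Y_{\mathbf{k}}$, which is where the minimality of $n$, the identity $\mathbf{k}\mathbf{M}^n=\mathbf{k}$, and $\mathbf{M}^n\cdot\mathbf{u}=\mathbf{u}$ are used together; everything else follows directly from the Action Lemma and the linear independence of distinct Fourier modes. (As a side remark not needed for the biconditional: the cyclic system in part~2 is self-consistent, hence admits nonzero solutions, precisely because $\gamma^n$ is a pure lattice translation, so the product of the phases $\prod_{j=0}^{n-1}e^{\,i\,\mathbf{k}\mathbf{M}^{j}\cdot\mathbf{T}}$ is unity for $\mathbf{k}$ on the dual lattice; when this product is not unity the only invariant harmonic is the zero one, still consistent with the stated equivalence.)
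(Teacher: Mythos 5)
Your proof is correct and follows essentially the same route as the paper's: apply the Vector Action Lemma term by term to see that $\gamma$ acts as a phase-weighted cyclic shift on the listed harmonics, then compare coefficients to obtain the recursion $a_{j+1}=e^{\,i\,\mathbf{k}\mathbf{M}^j\cdot\mathbf{T}}a_j$. You are in fact slightly more careful than the paper, which silently uses $\mathbf{M}^n\cdot\mathbf{u}=\mathbf{u}$ in the wrap-around term and takes the linear independence of the distinct Fourier modes for granted when equating coefficients.
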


\begin{proof}
  Both parts are immediate corollaries of Lemma 1. Specifically, the
  action of $\gamma$ takes the linear combination
\begin{equation}
  a_0\, \mathbf{u}\,
  Y_{\mathbf{k}} + a_1 \left( \mathbf{M}\cdot\mathbf{u} \right)
  Y_{\mathbf{k}\mathbf{M}} + \cdots + a_{n -2} \left( \mathbf{M}^{n -
    2}\cdot \mathbf{u} \right) Y_{\mathbf{k}\mathbf{M}^{n - 2}} + a_{n
    - 1} \left( \mathbf{M}^{n - 1}\cdot\mathbf{u} \right)
  Y_{\mathbf{k}\mathbf{M}^{n - 1}},
  \label{e:A1}
  \end{equation}
  into
  \begin{equation}
    a_0 \left( \mathbf{M}\cdot
  \mathbf{u} \right) e^{i \mathbf{k} \cdot \mathbf{T}}
  Y_{\mathbf{k}\mathbf{M}} + a_1 \left(\mathbf{M}^2\cdot
  \mathbf{u}\right) e^{i\, \mathbf{k} \cdot \mathbf{M} \mathbf{T}}
  Y_{\mathbf{k}\mathbf{M}^2} + \cdots + a_{n - 2} \left( \mathbf{M}^{n
    - 1}\cdot \mathbf{u} \right) e^{i\, \mathbf{k} \mathbf{M}^{n -
      2}\mathbf{T}} Y_{\mathbf{k}\mathbf{M}^{n - 1}} + a_{n - 1}
  \mathbf{u}\, e^{i\, \mathbf{k} \mathbf{M}^{n - 1}\cdot\mathbf{T}}
  Y_{\mathbf{k}}.
  \label{e:A2}
  \end{equation}
  Therefore the $n$-dimensional subspace spanned by $\left\{
  \mathbf{u}\, Y_{\mathbf{k}}, \left(\mathbf{M}\cdot\mathbf{u}\right)
  Y_{\mathbf{k}\mathbf{M}}, \cdots, \left(\mathbf{M}^{n -
    1}\cdot\mathbf{u} \right) Y_{\mathbf{k}\mathbf{M}^{n- 1}}
  \right\}$ is preserved as a set.  Comparing the coefficients of the
  expressions in Eqs.~(\ref{e:A1}) and (\ref{e:A2}), it follows that
  they are identical if only and if $a_{j + 1} = e^{i\,
    \mathbf{k}\mathbf{M}^j\cdot \mathbf{T}} a_j$ for each $j$ (mod
  $n$).
\end{proof}

\begin{lemma}
  \label{l:lemma3}
  (Tensor Action Lemma) The natural action of an isometry $\gamma$ of
  Euclidean space $\mathbb{E}^3$ takes a tensor harmonic
  $\mathbf{u}\otimes\mathbf{v}
  \,Y_{\mathbf{k}}=\mathbf{u}\otimes\mathbf{v}\,e^{i\,\mathbf{k}\cdot\mathbf{x}}$
  to another tensor harmonic $\left( \mathbf{M}\cdot \mathbf{u}
  \right)\otimes \left( \mathbf{M}\cdot \mathbf{v} \right) e^{i\, \mathbf{k}
    \cdot \mathbf{T}} Y_{\mathbf{k}\mathbf{M}} \left( \mathbf{x}
  \right)$, where $\mathbf{u}$ and $\mathbf{v}$ are arbitrary constant
  vectors.
\end{lemma}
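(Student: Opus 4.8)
The plan is to mirror, almost verbatim, the structure of the proof of the Vector Action Lemma (Lemma~\ref{l:lemma1}), exploiting the fact that under an isometry the natural action on a rank-two tensor factorizes through its action on each vector slot, while the action on the scalar amplitude $Y_{\mathbf{k}}(\mathbf{x})=e^{i\,\mathbf{k}\cdot\mathbf{x}}$ is exactly the one already established for the scalar harmonics in Ref.~\cite{riazuelo2004cosmic}.

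First I would write the action of $\gamma$ on the product $\mathbf{u}\otimes\mathbf{v}\,Y_{\mathbf{k}}$ as the product of its actions on the three factors, $\gamma(\mathbf{u}\otimes\mathbf{v}\,Y_{\mathbf{k}}) = \gamma(\mathbf{u})\otimes\gamma(\mathbf{v})\,\gamma(e^{i\,\mathbf{k}\cdot\mathbf{x}})$. Since $\gamma$ sends a constant vector $\mathbf{w}$ to $\mathbf{M}\cdot\mathbf{w}$ (only the rotation/reflection part $\mathbf{M}$ acts on constant vectors), the first two factors become $(\mathbf{M}\cdot\mathbf{u})\otimes(\mathbf{M}\cdot\mathbf{v})$. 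Then I would expand the exponential exactly as in Lemma~\ref{l:lemma1}: $\gamma(e^{i\,\mathbf{k}\cdot\mathbf{x}}) = e^{i\,\mathbf{k}\cdot(\mathbf{M}\cdot\mathbf{x}+\mathbf{T})} = e^{i\,\mathbf{k}\cdot\mathbf{T}}\,e^{i\,(\mathbf{k}\mathbf{M})\cdot\mathbf{x}} = e^{i\,\mathbf{k}\cdot\mathbf{T}}\,Y_{\mathbf{k}\mathbf{M}}(\mathbf{x})$. Combining the three pieces yields $(\mathbf{M}\cdot\mathbf{u})\otimes(\mathbf{M}\cdot\mathbf{v})\,e^{i\,\mathbf{k}\cdot\mathbf{T}}\,Y_{\mathbf{k}\mathbf{M}}(\mathbf{x})$, which is the asserted image.

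Finally I would check that the resulting object is again a tensor harmonic, i.e.\ an eigenfunction of $\nabla^c\nabla_c$ with the same eigenvalue. This is immediate: its components are constants times $Y_{\mathbf{k}\mathbf{M}}$, and $\nabla^c\nabla_c Y_{\mathbf{k}\mathbf{M}} = -\,\mathbf{g}^{-1}(\mathbf{k}\mathbf{M},\mathbf{k}\mathbf{M})\,Y_{\mathbf{k}\mathbf{M}} = -\,\mathbf{g}^{-1}(\mathbf{k},\mathbf{k})\,Y_{\mathbf{k}\mathbf{M}}$ because $\mathbf{M}$ preserves the Euclidean inner product. The only point needing a little care — and it is minor — is bookkeeping of index conventions, namely confirming that the covector written $\mathbf{k}\mathbf{M}$ here is the same combination $k_b M^b{}_a$ appearing in Lemma~\ref{l:lemma1}, so that the two lemmas compose correctly when iterated in the Tensor Invariance Lemma. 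I do not anticipate a genuine obstacle: the entire content is the multiplicativity of the isometry action over tensor products together with the orthogonality of $\mathbf{M}$.
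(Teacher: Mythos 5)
Your proposal is correct and follows essentially the same route as the paper's proof: factor the action of $\gamma$ over the two vector slots and the scalar factor, use $\gamma(\mathbf{u}\otimes\mathbf{v})=(\mathbf{M}\cdot\mathbf{u})\otimes(\mathbf{M}\cdot\mathbf{v})$, and expand $e^{i\,\mathbf{k}\cdot(\mathbf{M}\cdot\mathbf{x}+\mathbf{T})}=e^{i\,\mathbf{k}\cdot\mathbf{T}}\,Y_{\mathbf{k}\mathbf{M}}(\mathbf{x})$ exactly as in the Vector Action Lemma. Your additional check that the image is an eigenfunction with unchanged eigenvalue (via orthogonality of $\mathbf{M}$) is a harmless extra that the paper leaves implicit.
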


\begin{proof}
  The action of the isometry $\gamma$ on
  $\mathbf{u}\otimes\mathbf{v}\,Y_\mathbf{k}$ is given by:
  \begin{align} \mathbf{u}\otimes\mathbf{v}\, Y_{\mathbf{k}} \mapsto \gamma
     \left( \mathbf{u}\otimes\mathbf{v}\, Y_{\mathbf{k}} \right) &=
     \gamma \left( \mathbf{u} \right)\otimes \gamma \left(
     \mathbf{v} \right)\, \gamma \left( Y_{\mathbf{k}} \right)
     = \left( \mathbf{M}\cdot \mathbf{u} \right)\otimes
     \left( \mathbf{M}\cdot \mathbf{v}
     \right) e^{i\, \mathbf{k}\cdot \left( \mathbf{M}\cdot \mathbf{x} +
     \mathbf{T} \right)} \notag \\
     &= \left( \mathbf{M}\cdot
     \mathbf{u} \right)\otimes \left( \mathbf{M}\cdot \mathbf{v} \right)
     e^{i\,\mathbf{k} \cdot \mathbf{T}} Y_{\mathbf{k}\mathbf{M}}
     \left( \mathbf{x} \right),  \end{align}
where we have used $\gamma \left( \mathbf{u}\otimes\mathbf{v}  \right) =
\mathbf{u}'\otimes\mathbf{v}' = \left(\mathbf{M}\cdot \mathbf{u}\right)
\otimes\left(\mathbf{M}\cdot \mathbf{v}\right)$.
\end{proof}

\begin{lemma}
  \label{l:lemma4}
  (Tensor Invariance Lemma) If $\gamma$ is an isometry of
  $\mathbb{E}^3$ with matrix part $\mathbf{M}$ and translational part
  $\mathbf{T}$, if $\mathbf{u}\otimes\mathbf{v}\,Y_{\mathbf{k}}$
  is a tensor harmonic on $\mathbb{E}^3$, and if $n$ is the smallest
  positive integer such that $\mathbf{k} = \mathbf{k}\mathbf{ M}^n$
  (typically $n$ is simply the order of the matrix $\mathbf{M}$), then
  the action of $\gamma$
  \begin{enumeratenumeric}
  \item preserves the $n$-dimensional space of harmonics spanned by\\
    $\left\{ \mathbf{u}\otimes\mathbf{v}\, Y_{\mathbf{k}},
    \left(\mathbf{M}\cdot\mathbf{u}\right)\otimes
    \left(\mathbf{M}\cdot\mathbf{v}\right) Y_{\mathbf{k}\mathbf{M}},
    \cdots, \left(\mathbf{M}^{n - 1}\cdot \mathbf{u}\right)
    \otimes\left(\mathbf{M}^{n - 1}\cdot \mathbf{v}\right)
      Y_{\mathbf{k} \mathbf{M}^{n - 1}} \right\}$ as a set, and

    \item leaves invariant the harmonic,
      $ a_0\, \mathbf{u}\otimes\mathbf{v}\, Y_{\mathbf{k}}
      + a_1 \left(\mathbf{M}\cdot \mathbf{u}  \right)\otimes
      \left(\mathbf{M}\cdot \mathbf{v}  \right) Y_{\mathbf{k}\mathbf{M}}
      + \cdots + a_{n- 1} \left(  \mathbf{M}^{n - 1}\cdot \mathbf{u} \right)
      \otimes\left(  \mathbf{M}^{n - 1}\cdot \mathbf{v} \right)
       Y_{\mathbf{k}\mathbf{M}^{n - 1}}, $
    if and only if
    $ a_{j + 1} = e^{i\, \mathbf{k}\mathbf{M}^j\cdot \mathbf{T}} a_j$
       for each $j$ (mod  $n$).
  \end{enumeratenumeric}
\end{lemma}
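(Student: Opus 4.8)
The plan is to reduce everything to repeated application of the Tensor Action Lemma (Lemma~\ref{l:lemma3}), in exact parallel with the way the Vector Invariance Lemma follows from the Vector Action Lemma. First I would record the action of $\gamma$ on a single generator. Applying Lemma~\ref{l:lemma3} with wave vector $\mathbf{k}\mathbf{M}^j$ and constant tensor $\left(\mathbf{M}^j\cdot\mathbf{u}\right)\otimes\left(\mathbf{M}^j\cdot\mathbf{v}\right)$, and using $\mathbf{M}\cdot\mathbf{M}^j=\mathbf{M}^{j+1}$ together with $\left(\mathbf{k}\mathbf{M}^j\right)\mathbf{M}=\mathbf{k}\mathbf{M}^{j+1}$, gives
\[
  \gamma\!\left[\left(\mathbf{M}^j\cdot\mathbf{u}\right)\otimes\left(\mathbf{M}^j\cdot\mathbf{v}\right)Y_{\mathbf{k}\mathbf{M}^j}\right]
  = e^{\,i\,\mathbf{k}\mathbf{M}^j\cdot\mathbf{T}}\left(\mathbf{M}^{j+1}\cdot\mathbf{u}\right)\otimes\left(\mathbf{M}^{j+1}\cdot\mathbf{v}\right)Y_{\mathbf{k}\mathbf{M}^{j+1}}.
\]
Since $n$ is the smallest positive integer with $\mathbf{k}=\mathbf{k}\mathbf{M}^n$, the index $j+1$ may be read modulo $n$, so $\gamma$ sends each of the $n$ generators to a nonzero scalar multiple of another generator. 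This immediately establishes part~1, that the $n$-dimensional span is preserved as a set.

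For part~2 I would apply the same computation to a general element $\Phi=\sum_{j=0}^{n-1}a_j\left(\mathbf{M}^j\cdot\mathbf{u}\right)\otimes\left(\mathbf{M}^j\cdot\mathbf{v}\right)Y_{\mathbf{k}\mathbf{M}^j}$, obtaining
\[
  \gamma(\Phi)=\sum_{j=0}^{n-1}a_j\,e^{\,i\,\mathbf{k}\mathbf{M}^j\cdot\mathbf{T}}\left(\mathbf{M}^{j+1}\cdot\mathbf{u}\right)\otimes\left(\mathbf{M}^{j+1}\cdot\mathbf{v}\right)Y_{\mathbf{k}\mathbf{M}^{j+1}},
\]
and then shift the summation index $j\mapsto j-1$ (mod $n$) so that $\gamma(\Phi)$ is expressed in the same generator basis as $\Phi$. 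Equating coefficients of like generators yields $a_j=e^{\,i\,\mathbf{k}\mathbf{M}^{j-1}\cdot\mathbf{T}}a_{j-1}$ for every $j$ (mod $n$), which is exactly the stated condition $a_{j+1}=e^{\,i\,\mathbf{k}\mathbf{M}^j\cdot\mathbf{T}}a_j$; conversely, substituting that relation back into $\gamma(\Phi)$ reproduces $\Phi$ term by term, giving the ``if'' direction.

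The one point that needs a short separate justification — and the only place I expect any care to be required — is the linear independence of the generators $\left(\mathbf{M}^j\cdot\mathbf{u}\right)\otimes\left(\mathbf{M}^j\cdot\mathbf{v}\right)Y_{\mathbf{k}\mathbf{M}^j}$ for $j=0,\dots,n-1$, since that is what licenses the coefficient comparison. This follows because minimality of $n$ forces $\mathbf{k},\mathbf{k}\mathbf{M},\dots,\mathbf{k}\mathbf{M}^{n-1}$ to be pairwise distinct (otherwise $\mathbf{k}\mathbf{M}^{j-i}=\mathbf{k}$ with $0<j-i<n$), so the plane waves $Y_{\mathbf{k}\mathbf{M}^j}$ are linearly independent, while each tensor prefactor is nonzero because $\mathbf{M}$ is invertible and $\mathbf{u},\mathbf{v}\neq0$. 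I would also remark, as in the vector case, that if the resulting cyclic recursion is inconsistent around the full loop, i.e. $\prod_{j=0}^{n-1}e^{\,i\,\mathbf{k}\mathbf{M}^j\cdot\mathbf{T}}\neq1$, then it forces all $a_j=0$, so $\Phi=0$ is the only invariant and the stated equivalence still holds vacuously. No genuine obstacle is anticipated; the work is entirely bookkeeping of the cyclic index and an appeal to Lemma~\ref{l:lemma3}.
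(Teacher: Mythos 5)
Your proposal is correct and follows essentially the same route as the paper: both reduce the statement to term-by-term application of the Tensor Action Lemma (Lemma~\ref{l:lemma3}), observe that the generators are cyclically permuted (up to the phases $e^{\,i\,\mathbf{k}\mathbf{M}^j\cdot\mathbf{T}}$), and obtain the recursion $a_{j+1}=e^{\,i\,\mathbf{k}\mathbf{M}^j\cdot\mathbf{T}}a_j$ by comparing coefficients in the generator basis. Your added remarks on the linear independence of the generators and on the cyclic consistency condition are sound refinements that the paper leaves implicit.
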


\begin{proof}
  Both parts are immediate corollaries of Lemma 3. Specifically, the
  action of $\gamma$ takes the linear combination
\begin{align}
  a_0\, \mathbf{u}\otimes\mathbf{v}\,
  Y_{\mathbf{k}} &+ a_1 \left( \mathbf{M}\cdot\mathbf{u} \right)
  \otimes\left( \mathbf{M}\cdot\mathbf{v} \right)
  Y_{\mathbf{k}\mathbf{M}} + \cdots
  + a_{n -2} \left( \mathbf{M}^{n -2}\cdot \mathbf{u} \right)
  \otimes\left( \mathbf{M}^{n -2}\cdot \mathbf{v} \right)
  Y_{\mathbf{k}\mathbf{M}^{n - 2}} \notag \\
  &+ a_{n- 1}
  \left( \mathbf{M}^{n - 1}\cdot\mathbf{u} \right)
  \otimes\left( \mathbf{M}^{n - 1}\cdot\mathbf{v} \right)
  Y_{\mathbf{k}\mathbf{M}^{n - 1}},
  \label{e:A3}
  \end{align}
  into
  \begin{eqnarray}
    &&a_0 \left( \mathbf{M}\cdot \mathbf{u} \right) \otimes\left(
    \mathbf{M}\cdot \mathbf{v} \right) e^{i \mathbf{k} \cdot
      \mathbf{T}} Y_{\mathbf{k}\mathbf{M}} + a_1
    \left(\mathbf{M}^2\cdot \mathbf{u}\right)
    \otimes\left(\mathbf{M}^2\cdot \mathbf{v}\right) e^{i\, \mathbf{k}
      \cdot \mathbf{M} \mathbf{T}} Y_{\mathbf{k}\mathbf{M}^2} + \cdots\nonumber\\
    &&\hspace{2cm}
    + a_{n - 2} \left( \mathbf{M}^{n- 1}\cdot \mathbf{u} \right)
    \otimes\left( \mathbf{M}^{n- 1}\cdot \mathbf{v} \right) e^{i\,
      \mathbf{k} \mathbf{M}^{n - 2}\mathbf{T}}
    Y_{\mathbf{k}\mathbf{M}^{n - 1}} + a_{n - 1} \mathbf{u}\otimes\mathbf{v}\, e^{i\,
      \mathbf{k} \mathbf{M}^{n - 1}\cdot\mathbf{T}} Y_{\mathbf{k}}.
  \label{e:A4}
  \end{eqnarray}
  Therefore the $n$-dimensional subspace spanned by \\$\left\{
  \mathbf{u}\otimes\mathbf{v}\,
  Y_{\mathbf{k}}, \left(\mathbf{M}\cdot\mathbf{u}\right)
  \otimes\left(\mathbf{M}\cdot\mathbf{v}\right)
  Y_{\mathbf{k}\mathbf{M}}, \cdots,
  \left(\mathbf{M}^{n - 1}\cdot\mathbf{u} \right)
  \otimes\left(\mathbf{M}^{n - 1}\cdot\mathbf{v} \right)
  Y_{\mathbf{k}\mathbf{M}^{n- 1}}
  \right\}$ is preserved as a set.  Comparing the coefficients of the
  expressions in Eqs.~(\ref{e:A3}) and (\ref{e:A4}), it follows that
  they are identical if only and if $a_{j + 1} = e^{i\,
    \mathbf{k}\mathbf{M}^j\cdot \mathbf{T}} a_j$ for each $j$ (mod
  $n$).
\end{proof}

\acknowledgments

We thank James Nester for helpful conversations
concerning this work. L.L. thanks the
  Morningside Center for Mathematics, Academy of Mathematics and
  Systems Science, Chinese Academy of Sciences, Beijing 100190, China
  for their hospitality during a visit in which a portion of this
  research was performed. This research was supported in part by the
National Natural Science Foundation of China grants 11503003 and
11633001, the Interdisciplinary Research Funds of Beijing Normal
University, the Strategic Priority Research Program of the Chinese
Academy of Sciences grant XDB23000000, and by the National Science
Foundation, USA grants PHY-1604244, DMS-1620366, and PHY-1912419.


\end{document}